\newtheorem{thm}{Theorem}
\newtheorem{lem}{Lemma}
\numberwithin{equation}{section}
\title{Exact solution of weighted partially directed walks crossing a square}
\author{Nicholas R.\ Beaton\thanks{\href{mailto:nrbeaton@unimelb.edu.au}{nrbeaton@unimelb.edu.au}}\ }
\author{Aleksander L.\ Owczarek\thanks{\href{mailto:owczarek@unimelb.edu.au}{owczarek@unimelb.edu.au}}}
\affil{School of Mathematics and Statistics \\ The University of Melbourne \\ Parkville, VIC, 3010, Australia}
\begin{document}

\maketitle
\begin{abstract}
We consider partially directed walks crossing a $L\times L$ square weighted according to their length by a fugacity $t$. The exact solution of this model is computed in three different ways, depending on whether $t$ is less than, equal to or greater than 1. In all cases a complete expression for the dominant asymptotic behaviour of the partition function is calculated. The model admits a dilute to dense phase transition, where for $0 < t < 1$ the partition function scales exponentially in $L$ whereas for $t>1$ the partition function scales exponentially in $L^2$, and when $t=1$ there is an intermediate scaling which is exponential in $L \log{L}$. 

\end{abstract}

\section{Introduction}\label{sec:intro}

The problem of self-avoiding walks (SAWs) crossing a square~\cite{Whittington1990,Madras1995,BousquetMelou2005,Knuth1976}, or walks or polygons simply contained in a square~\cite{Bradly2022,Guttmann2022,Guttmann2022b} in two dimensions, or inside a cubic box in three dimensions~\cite{Whittington2022}, has attracted attention over an extended period including recently, with various rigorous and numerical (Monte Carlo and series analysis) results being accumulated. These problems provide a simple model of a confined polymer which illustrate a different lens through which to consider single polymer behaviour. When a length fugacity is added to the basic set-up the models can be shown to demonstrate a phase transition between a dilute phase for low fugacity and a dense phase for large fugacity \cite{Whittington1990,Madras1995,BousquetMelou2005}. The scaling of the partition function is fundamentally different in these two regimes with exponential scaling linear in the side of the square (box) in the dilute phase and exponential in the area of the square (volume of the box) in the dense phase. 

For example, let $c_{L,n}$ be the number of $n$-step SAWs on the square lattice which cross an $L\times L$ square from the south-west corner to the north-east corner, and define the partition function
\begin{equation}
C_L(t) = \sum_n c_{L,n}t^n.
\end{equation}
Then it is known rigorously~(e.g.~\cite{Madras1995,Whittington1990}) that the limits
\begin{align}
\lambda_1(t) &= \lim_{L\to\infty} C_L(t)^{1/L} \\
\lambda_2(t) &= \lim_{L\to\infty} C_L(t)^{1/L^2}
\end{align}
exist or are infinite. More precisely, $\lambda_1(t)$ is finite for $0<t\leq\mu^{-1}$ and infinite for $t>\mu^{-1}$, where $\mu$ is the connective constant of the lattice; and $\lambda_2(t)=1$ for $0<t\leq\mu^{-1}$ and is finite and $>1$ for $t>\mu^{-1}$. Moreover $\lambda_1(t) <1$ for $t<\mu^{-1}$ and $\lambda_1(\mu^{-1}) = 1$; otherwise the values of $\lambda_1(t)$ and $\lambda_2(t)$ are not known for $t<\mu^{-1}$ and $t>\mu^{-1}$ respectively. These results generalise to higher dimensions. The precise nature of the `subexponential' behaviour of $C_L(t)$ is not known, however it has been recently shown~\cite{Whittington2022} that
\begin{equation}
C_L(1) = \lambda^{L^2+O(L)}
\end{equation}
with $\lambda=\lambda_2(1)$. A similar result holds for higher dimensions. This was motivated by the conjecture \cite{Guttmann2022,Guttmann2022b} that
\begin{equation}
C_L(1) \sim \lambda^{L^2+b L+c} L^g
\end{equation}
for constants $b$, $c$ and $g$. Note that here and below in the sequel the notation $a_L \sim b_L$ indicates that $\lim_{L\to\infty} \frac{a_L}{b_L} = 1$.

Here we consider a variation of this model, namely \emph{partially directed walks} (PDWs) crossing an $L\times L$ square. These are walks which take steps $(1,0), (0,1)$ and $(0,-1)$ while remaining self-avoiding. This is, of course, a  simpler model than SAWs, but directed and partially directed walks have been shown to display complex critical behaviour for a range of models, from adsorption to collapse (see e.g.~\cite{forgacs1991b-a,owczarek1993b-:a,Zhou:2006tq,Owczarek:2007jy,VANRENSBURG2008623,Owczarek:2009dp,Brak:2009fk,Lam:2009ud,Owczarek:2010bv,Rensburg2016a,Legrand2022a}). Here we compute the exact solution of PDWs crossing a square and provide the full dominant asymptotics of the partition function as a function of the length fugacity $t$.

For PDWs the dilute-dense phase transition occurs at $t=1$. Interestingly, each regime (dilute, dense, and at the critical point) requires a different mathematical approach to elucidate the solution. For small $t<1$ the generating function is found via the kernel method, and the asymptotics of the partition function follow via  saddle point methods. For large $t>1$ a transfer matrix method is required, and is analysed with a Bethe ansatz type solution and the asymptotics follow a subtle analysis of the Bethe roots. The solution at $t=1$ is simply found via a direct combinatorial argument.


\section{Model and central results}\label{sec:model}

Let $\mathcal{P}_{L,n}$ be the set of $n$-step PDWs which cross an $L\times L$ square from the south-west corner to the north-east corner, and let $p_{L,n} = |\mathcal{P}_{L,n}|$. Define the partition function
\begin{equation}
P_L(t) = \sum_n p_{L,n} t^n.
\end{equation}
For a given value of $t>0$, the Boltzmann distribution on $\mathcal{P}_L = \bigcup_n \mathcal{P}_{L,n}$ assigns probability
\begin{equation}
\mathbb{P}_L(t,\omega) = \frac{t^{|\omega|}}{P_L(t)}
\end{equation}
to the PDW $\omega$, where $|\omega|$ is the length of $\omega$. See \cref{fig:samples} for some PDWs in the box of size $L=20$ sampled from the Boltzmann distribution at various values of $t$.

We then define the mean number of steps for walks in the $L\times L$ square to be
\begin{equation}\label{eqn:meanlength}
\langle n\rangle_{L} = \frac{\sum_n n p_{L,n}t^n}{\sum_n p_{L,n}t^n} = \frac{t\frac{d}{dt}P_{L}(t)}{P_L(t)}.
\end{equation}

Our main result is the following.
\begin{figure}[h]
\centering
\begin{subfigure}{0.32\textwidth}
\includegraphics[width=\textwidth]{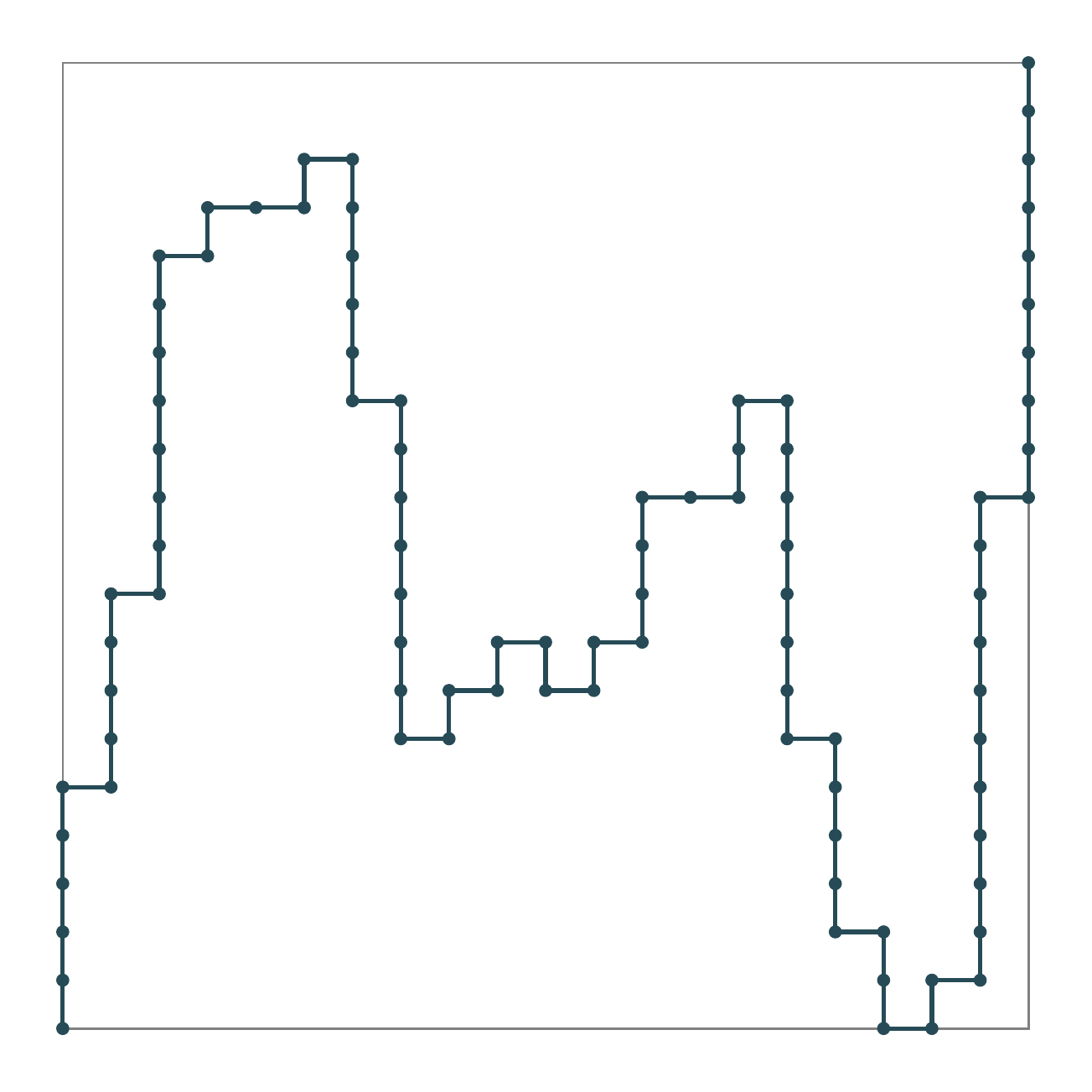}
\caption{}
\end{subfigure}
\hfill
\begin{subfigure}{0.32\textwidth}
\includegraphics[width=\textwidth]{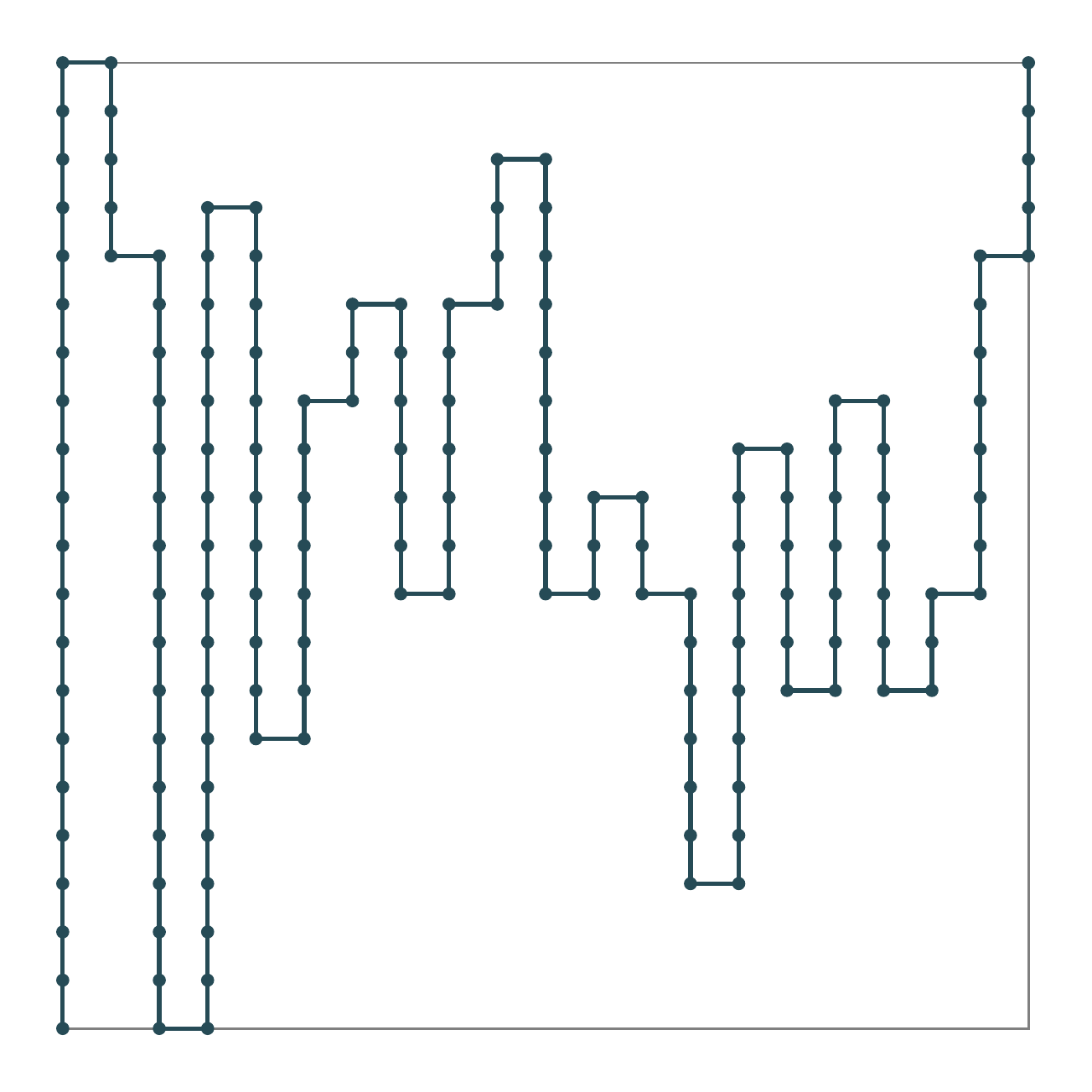}
\caption{}
\end{subfigure}
\hfill
\begin{subfigure}{0.32\textwidth}
\includegraphics[width=\textwidth]{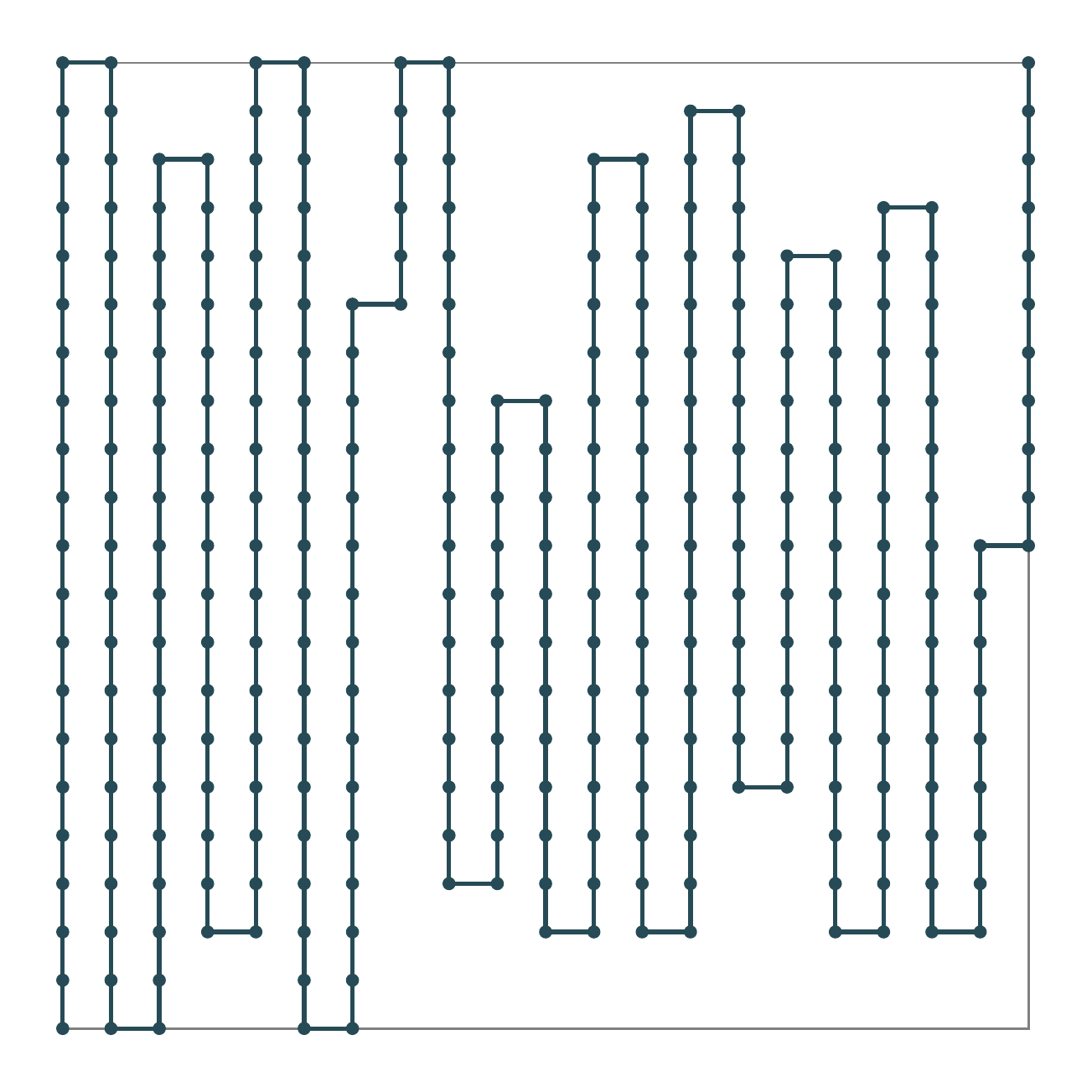}
\caption{}
\end{subfigure}
\caption{PDWs in a box of size $L=20$ sampled from the Boltzmann distribution, at (a) $t=0.8$, (b) $t=1$ and (c) $t=1.2$. The respective lengths are 92, 170 and 326.}
\label{fig:samples}
\end{figure}

\begin{thm}\label{thm:main}
The partition functions $P_L(t)$ satisfy the following.
\begin{itemize}
\item[\textup{(i)}] For $t=1$,
\begin{equation}
P_L(1) = (L+1)^L \sim e\cdot e^{L \log L}.
\end{equation}
\item[\textup{(ii)}] For $0<t<1$,
\begin{equation}\label{eqn:pf_asymp_smallt}
P_L(t) \sim \frac{1}{\sqrt{\pi}} \cdot \left(\frac{1-t^2}{1+t^2}\right)^2 \cdot L^{-1/2} \cdot \left(\frac{4t^2}{1-t^2}\right)^L.
\end{equation}
\item[\textup{(iii)}] For $t>1$,
\begin{equation}\label{eqn:pf_asymp_larget}
P_L(t) \sim \begin{cases} \displaystyle \left(\frac{t^4}{t^2-1}\right)^L t^{L^2} & L \text{ even} \\ \displaystyle \frac{t^2-1}{t^2} \cdot L^2 \cdot \left(\frac{t^3}{t^2-1}\right)^L \cdot t^{L^2} & L \text{ odd.} \end{cases}
\end{equation}
\end{itemize}
\end{thm}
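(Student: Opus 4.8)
The three regimes share a common starting point, so the plan is to first fix a single combinatorial encoding and then branch according to the value of $t$. Because the only horizontal step is $(1,0)$, the $x$-coordinate is nondecreasing and each column is visited on one contiguous time interval; self-avoidance then forces the vertical run inside each column to be monotone. Hence a crossing PDW is specified exactly by the heights $b_0, b_1, \dots, b_{L-1} \in \{0,1,\dots,L\}$ at which it passes from column $k$ to column $k+1$, and a short bookkeeping of horizontal steps and per-column vertical runs gives its length as
\[ n = 2L + b_0 - b_{L-1} + \sum_{k=1}^{L-1} |b_k - b_{k-1}|, \]
so that
\[ P_L(t) = t^{2L} \sum_{b_0,\dots,b_{L-1}=0}^{L} t^{\,b_0 - b_{L-1}} \prod_{k=1}^{L-1} t^{|b_k - b_{k-1}|} = t^{2L}\, \boldsymbol{\alpha}^{\top} K^{L-1} \boldsymbol{\beta}, \]
where $K$ is the $(L+1)\times(L+1)$ symmetric Toeplitz matrix $K_{ab} = t^{|a-b|}$ and $\boldsymbol{\alpha},\boldsymbol{\beta}$ are the corner vectors $\alpha_b = t^{b}$, $\beta_b = t^{-b}$. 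Part (i) is then immediate: at $t=1$ each summand is $1$, there are $(L+1)^L$ sequences, and $(L+1)^L = \exp\!\big(L\log L + L\log(1+1/L)\big) \sim e\cdot e^{L\log L}$.

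For part (ii), with $0<t<1$, the plan is to obtain an exact closed form and then run a saddle-point analysis. Building the walk column by column with a catalytic variable marking the current height yields a linear functional equation whose kernel is quadratic in that variable; the kernel method---substituting the admissible root that cancels the kernel---resolves the boundary unknowns and delivers $P_L(t)$ explicitly (equivalently, a spectral representation of $\boldsymbol{\alpha}^{\top}K^{L-1}\boldsymbol{\beta}$ governed by the Poisson-kernel symbol $\tfrac{1-t^2}{1-2t\cos\theta+t^2}$). For the asymptotics I would observe that the optimal height profile rises by one unit per column on average (the tilt $e^{\xi}=\tfrac{1+t^2}{2t}$ enforcing mean increment $1$), so that away from the two corners the walk never feels the boundaries: the bulk supplies the exponential rate $\big(\tfrac{4t^2}{1-t^2}\big)^L$ and, through a local central-limit fluctuation of the total displacement, the factor $L^{-1/2}$, while the south-west start and the north-east end each contribute an identical geometric correction $\tfrac{1-t^2}{1+t^2}$, accounting for the squared prefactor $\big(\tfrac{1-t^2}{1+t^2}\big)^2$. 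The delicate point is not the exponential order but pinning down this constant exactly, which means carrying the expansion to the right order and correctly combining the two corner contributions.

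For part (iii), with $t>1$, the entries $t^{|a-b|}$ grow towards the corners, $K$ is unbounded as $L\to\infty$, and the answer is governed by the top of its spectrum, so I would diagonalise $K$ directly. Using that $K^{-1}$ is tridiagonal, an interior eigenvector obeys a constant-coefficient three-term recurrence, giving
\[ v_a = A z^a + B z^{-a}, \qquad \lambda = \frac{t^2-1}{2t\cosh\psi - (1+t^2)}, \qquad z = e^{\psi}, \]
and the two boundary rows $a=0$, $a=L$ impose a quantisation (Bethe) equation on $\psi$. The dominant eigenvalue drives the denominator towards zero, i.e.\ $\psi_L \to \log t$, so $\lambda_L$ grows like $t^{L}$ and $K^{L-1}$ produces the leading $t^{L^2}$ scale; writing $\psi_L = \log t + \delta_L$, the quantisation condition fixes an exponentially small correction $\delta_L$ (with $\lambda_L \approx 1/\delta_L$), and the corner vectors are substituted into the spectral sum. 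Here the reversal $J$ with $(Jv)_a = v_{L-a}$ commutes with $K$ and satisfies $\boldsymbol{\beta}=t^{-L}J\boldsymbol{\alpha}$, so decomposing $\boldsymbol{\alpha}$ into symmetric and antisymmetric eigensectors reduces $P_L(t)$ to the difference $t^{L}\big[\boldsymbol{\alpha}_S^{\top}K^{L-1}\boldsymbol{\alpha}_S - \boldsymbol{\alpha}_A^{\top}K^{L-1}\boldsymbol{\alpha}_A\big]$.

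I expect the \emph{main obstacle} to be precisely the asymptotic analysis of this dominant Bethe root and its parity dependence. For $L$ even the two sectors' leading roots are separated and do not cancel, giving the clean $\big(\tfrac{t^4}{t^2-1}\big)^L t^{L^2}$; for $L$ odd the relevant symmetric and antisymmetric roots become exponentially close and the leading terms nearly cancel, so resolving this near-degeneracy both lowers the exponential base to $\tfrac{t^3}{t^2-1}$ and, through the closing spectral gap, injects the polynomial factor $\tfrac{t^2-1}{t^2}L^2$. Tracking $\delta_L$ finely enough to capture this cancellation---rather than merely establishing the $t^{L^2+O(L)}$ order known for the self-avoiding analogue---is the crux of the argument.
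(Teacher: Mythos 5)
Your strategy coincides with the paper's in all three regimes: the same height-sequence encoding gives $(L+1)^L$ at $t=1$; the kernel method plus saddle-point analysis is exactly the paper's route for $0<t<1$; and for $t>1$ your matrix $K$ is the paper's transfer matrix up to a scalar ($T_L = tK$), diagonalised by the same ansatz $z^a$ plus $z^{-a}$, with two real Bethe roots approaching $\frac1t$ (your $\psi_L\to\log t$) producing the $t^{L^2}$ scale, and the parity dichotomy arising from reinforcement versus near-cancellation of the two dominant terms. So this is not a different route, and the skeleton is sound; but there are gaps, one of them genuine.

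The genuine gap is in part (iii): nothing in your plan controls the other $L-1$ eigenvalues. You assert the asymptotics are ``governed by the top of the spectrum'', but this is something to prove, and it is where the paper does real work: \cref{lem:real_roots} (proved via Vieira's theorem on self-inversive polynomials, \cref{lem:vieira}, valid once $L>\frac{2}{t-1}$) shows that apart from the two real pairs all root pairs lie on the unit circle, so the corresponding eigenvalues are bounded, $|\lambda|\le\frac{t(t+1)}{t-1}$; the paper then additionally bounds its normalisation factors $D_{L,k}$ away from zero by showing no unit-circle root has argument within $\frac{\pi}{L+1}$ of $0$ or $\pi$. Some such control is indispensable here, because for odd $L$ your two dominant terms cancel to relative order $t^{-L}$, so any unaccounted term could a priori swamp the answer. (In your orthonormal-spectral framing the normalisation issue actually disappears: Cauchy--Schwarz bounds the total remaining contribution by $O\bigl(L\,\lVert\boldsymbol{\alpha}\rVert\,\lVert\boldsymbol{\beta}\rVert\,C^{L}\bigr)$ with $C=\frac{t+1}{t-1}$, i.e.\ $O(LC^{L}t^{L})$, negligible against $t^{L^2+O(L)}$ --- but this still requires proving that the remaining roots lie on the unit circle, and that each symmetry sector contains exactly one real root off it; neither statement is argued in your proposal.) Two lesser points. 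First, your stated mechanism for the parity effect is off: the two real roots are exponentially close to each other for \emph{both} parities of $L$; what changes is the sign $(-1)^{L-1}$ carried by the negative antisymmetric eigenvalue raised to the power $L-1$, which makes the sector contributions add for even $L$ and nearly cancel for odd $L$. Second, in part (ii) your tilt/corner heuristic fixes only the exponential rate $\left(\frac{4t^2}{1-t^2}\right)^L$; the prefactor $\frac{1}{\sqrt{\pi}}\left(\frac{1-t^2}{1+t^2}\right)^2L^{-1/2}$ must come from actually carrying out the kernel-method solution and saddle-point integral you name --- which is precisely the paper's computation, so that part is incomplete rather than wrong.
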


See \cref{fig:pfs} for plots of $P_L(t)$ for $t=\frac12$ and $t=2$.

\begin{figure}
\centering
\begin{subfigure}{0.49\textwidth}
\includegraphics[width=\textwidth]{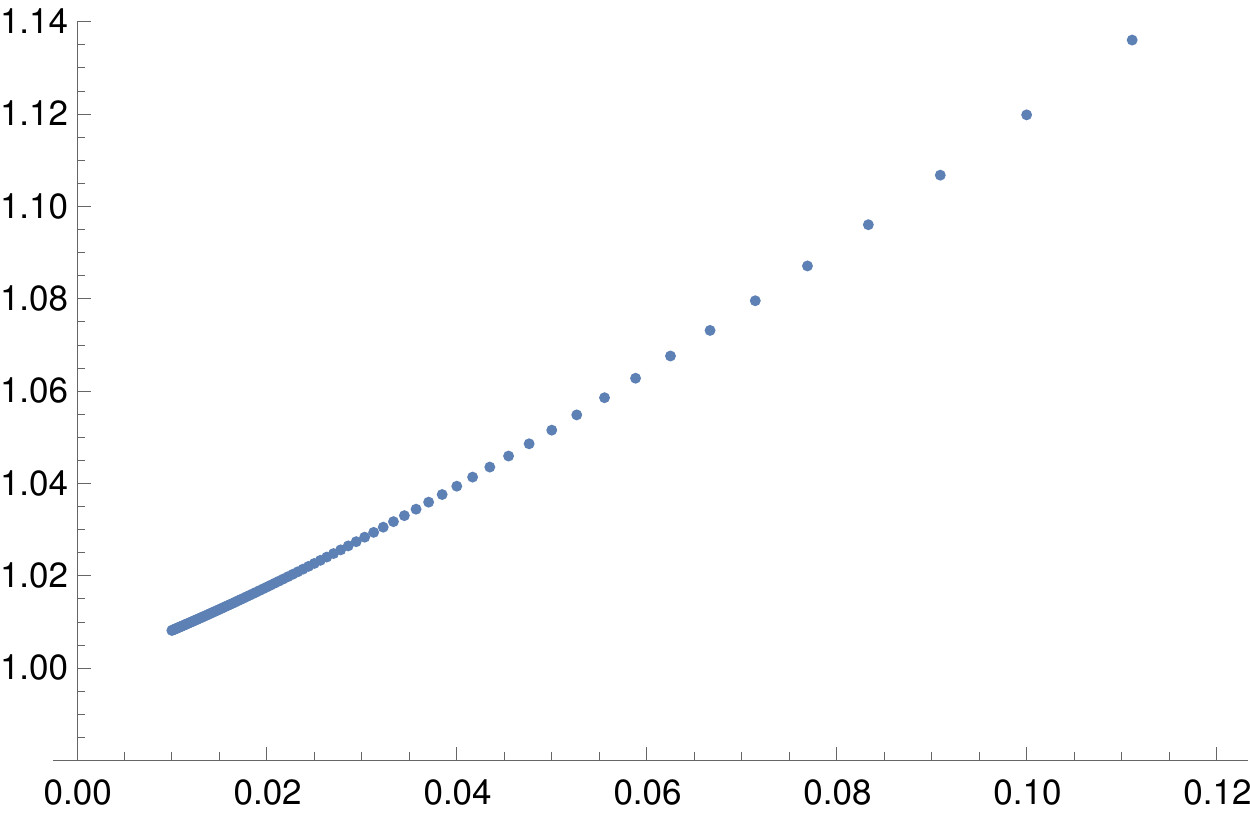}
\caption{}
\end{subfigure}

\begin{subfigure}{0.49\textwidth}
\includegraphics[width=\textwidth]{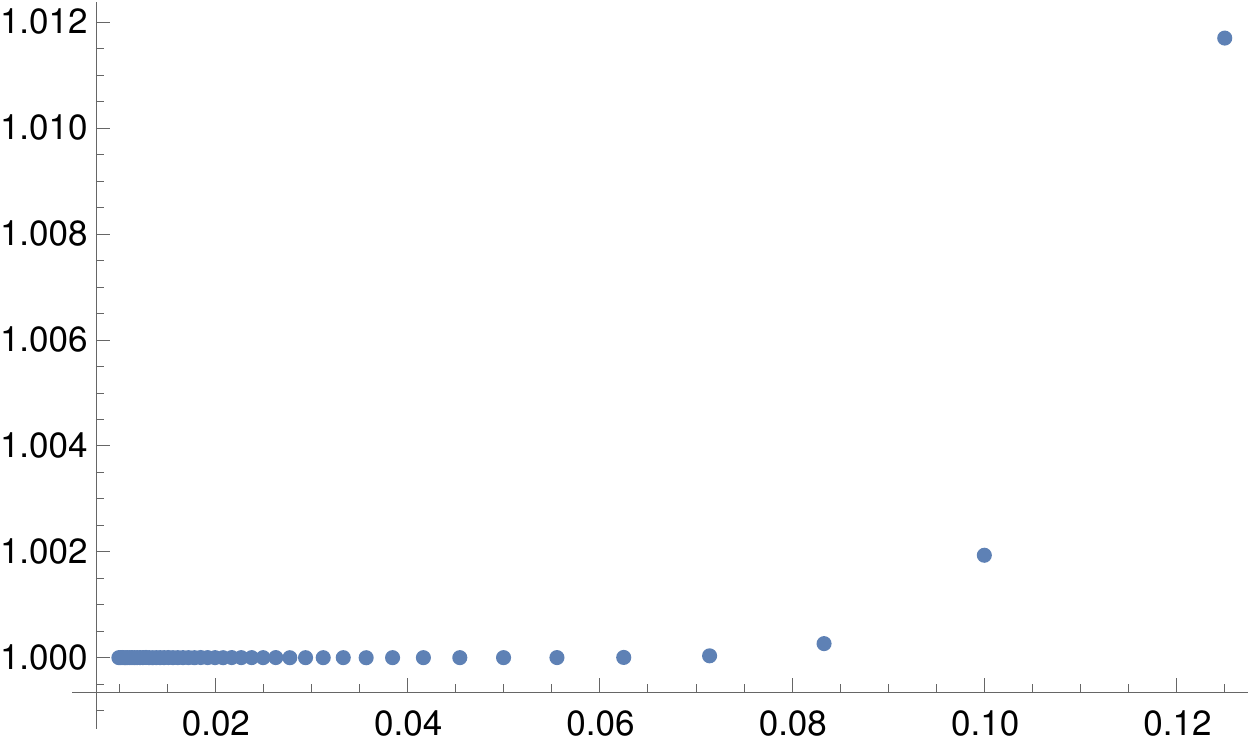}
\caption{}
\end{subfigure}
\hfill
\begin{subfigure}{0.49\textwidth}
\includegraphics[width=\textwidth]{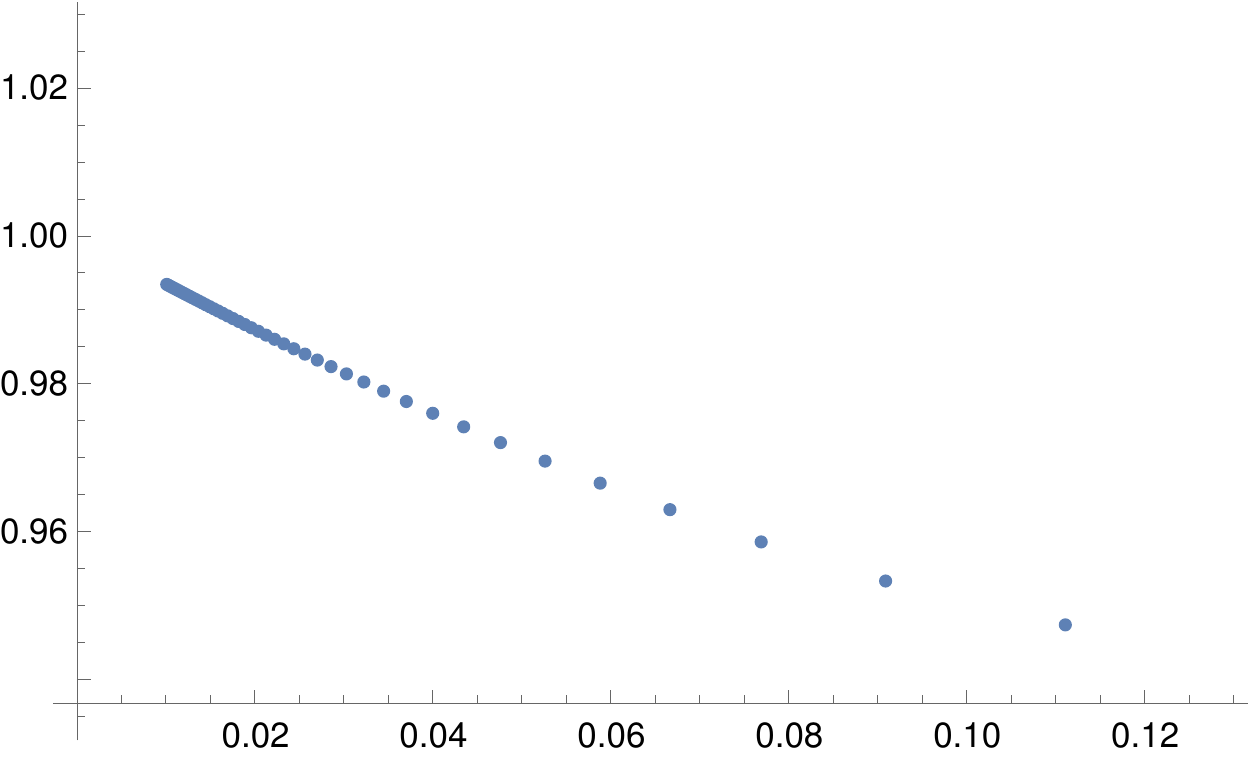}
\caption{}
\end{subfigure}
\caption{(a)~Plot of $P_L(t)$ divided by the expression~\eqref{eqn:pf_asymp_smallt} against $\frac1L$ at $t=\frac12$ for $L$ up to 100. (b)--(c) Plots of $P_L(t)$ divided by the expression~\eqref{eqn:pf_asymp_larget} against $\frac1L$ at $t=2$ for $L$ up to 100, for (b) even $L$ and (c) odd $L$.}
\label{fig:pfs}
\end{figure}

\begin{lem}\label{lem:mean_length}
The mean number of steps $\langle n \rangle_L$ satisfies the following.
\begin{itemize}
\item[\textup{(i)}] For $t=1$,
\begin{equation}
\langle n \rangle_L = \frac{L(L^2+7L+4)}{3(L+1)} \sim \frac{L^2}{3} + 2L - \frac23.
\end{equation}
\item[\textup{(ii)}] For $0<t<1$,
\begin{equation}
\langle n \rangle_L \sim \frac{2L}{1-t^2} - \frac{8t^2}{1-t^4}.
\end{equation}
\item[\textup{(iii)}] For $t>1$, 
\begin{equation}
\langle n \rangle_L \sim \begin{cases} \displaystyle L^2 + \frac{2(t^2-2)L}{t^2-1} & L \text{ even} \\
\displaystyle L^2 + \frac{(t^2-3)L}{t^2-1} + \frac{2}{t^2-1} & L \text{ odd.} \end{cases}
\end{equation}
\end{itemize}
\end{lem}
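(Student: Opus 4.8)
The plan is to build everything on the identity $\langle n\rangle_L = t\frac{d}{dt}\log P_L(t)$ recorded in \eqref{eqn:meanlength}, handling the critical case $t=1$ by an exact first-moment computation and the two off-critical cases by differentiating the asymptotic forms that \cref{thm:main} already provides.

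For part (i) I would use the bijective description of PDWs that underlies $P_L(1)=(L+1)^L$. Since east steps are the only horizontal moves and always increase $x$, a crossing PDW visits the columns $x=0,1,\dots,L$ in order, and it is determined by the ordinates $(h_0,\dots,h_{L-1})\in\{0,1,\dots,L\}^L$ at which its $L$ east steps are taken. Setting $h_{-1}=0$ and $h_L=L$, its length is exactly $|\omega| = L + \sum_{i=0}^{L}|h_i-h_{i-1}|$, the first term counting east steps and the sum counting the (monotone) vertical runs column by column. Because $P_L'(1)=\sum_\omega|\omega|$ and $P_L(1)=(L+1)^L$, the mean length at $t=1$ is the expectation of $|\omega|$ under the uniform measure on $\{0,\dots,L\}^L$. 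By linearity this reduces to $L$, plus the two boundary expectations $\mathbb{E}[h_0]=\mathbb{E}[L-h_{L-1}]=L/2$, plus $L-1$ interior expectations $\mathbb{E}|X-Y|$ for independent uniform $X,Y$ on $\{0,\dots,L\}$. A short computation using $\sum_{0\le b<a\le L}(a-b)=\binom{L+2}{3}$ gives $\mathbb{E}|X-Y|=\frac{L(L+2)}{3(L+1)}$, and assembling the pieces yields the exact closed form $\frac{L(L^2+7L+4)}{3(L+1)}$; polynomial division then produces the stated expansion $\frac{L^2}{3}+2L-\frac23+O(1/L)$.

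For parts (ii) and (iii) the strategy is to substitute $P_L(t)\sim g_L(t)$ from \cref{thm:main} into $\langle n\rangle_L = t\frac{d}{dt}\log P_L(t)$ and compute $t\frac{d}{dt}\log g_L(t)$. Each $g_L(t)$ has the shape $A(t)\,L^{\alpha}\,B(t)^L\,t^{\beta L^2}$, so $\log g_L$ is a linear combination of $\log A$, $\alpha\log L$, $L\log B$ and $\beta L^2\log t$, and the logarithmic derivative separates: the $\alpha\log L$ term drops out, $\log A$ contributes an $O(1)$ constant, $L\log B$ contributes $L\,tB'/B$, and the area term contributes $\beta L^2$. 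The remaining calculations are routine. For $0<t<1$ one has $\frac{d}{dt}\log\frac{4t^2}{1-t^2}=\frac2t+\frac{2t}{1-t^2}$, so the exponential factor supplies $L\left(2+\frac{2t^2}{1-t^2}\right)=\frac{2L}{1-t^2}$, while the prefactor $\left(\frac{1-t^2}{1+t^2}\right)^2$ supplies the constant $-\frac{8t^2}{1-t^4}$. For $t>1$ the factor $t^{L^2}$ supplies $L^2$, the $B(t)^L$ prefactors supply the stated linear-in-$L$ corrections, and in the odd case the $L^2$ prefactor additionally supplies the constant $\frac{2}{t^2-1}$. Matching these against the claimed expressions is then immediate.

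The main obstacle is justifying the interchange of the $L\to\infty$ asymptotics with the $t$-derivative: $P_L(t)\sim g_L(t)$ does not by itself imply $P_L'(t)\sim g_L'(t)$. I would close this gap by invoking analyticity: $P_L(t)$ is a polynomial in $t$, and the generating-function and transfer-matrix derivations behind \cref{thm:main} in fact give $P_L(z)/g_L(z)\to1$ locally uniformly on a complex neighbourhood of each relevant real interval (where $g_L$ may be chosen analytic and non-vanishing). A Cauchy estimate applied to $P_L/g_L-1$ (equivalently, Vitali's theorem) then upgrades this to $\frac{P_L'}{P_L}-\frac{g_L'}{g_L}\to0$, so that $t\frac{d}{dt}\log P_L(t)=t\frac{d}{dt}\log g_L(t)+o(1)$. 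Alternatively, and perhaps more cleanly, one can re-run the saddle-point analysis (for $t<1$) and the dominant-eigenvalue/Bethe-root analysis (for $t>1$) directly on the generating function of $\sum_n n\,p_{L,n}t^n = tP_L'(t)$, which shares the dominant singularity (respectively eigenvalue) of $P_L(t)$, and take the ratio. The delicate point in either route is controlling the correction terms precisely enough to capture the additive constants, not merely the leading $L^2$ or $L$ behaviour.
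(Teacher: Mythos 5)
Your proposal is correct, and for part (i) it takes a genuinely different route from the paper. For parts (ii) and (iii) you do exactly what the paper does: apply \eqref{eqn:meanlength} to the asymptotic forms in \cref{thm:main}, and your logarithmic-derivative computations (the $\frac{2L}{1-t^2}$ and $-\frac{8t^2}{1-t^4}$ terms for $t<1$; the $L^2$, linear and constant terms for $t>1$) all check out --- with the small clarification that the constant $\frac{2}{t^2-1}$ in the odd case comes from the $t$-dependent factor $\frac{t^2-1}{t^2}$ in the prefactor, the $L^2$ itself being $t$-independent. For part (i), however, the paper differentiates the transfer-matrix expression \eqref{eqn:pf_from_matrix} and sets $t=1$, explicitly remarking in \cref{sec:unweighted} that the height-encoding bijection ``is of no use'' for computing $\langle n\rangle_L$; your argument shows that remark is too pessimistic. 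Writing the length as $L+\sum_{i=0}^{L}|h_i-h_{i-1}|$ with $h_{-1}=0$, $h_L=L$, and taking expectations under the uniform measure on $\{0,\dots,L\}^L$ is exactly right: with $\mathbb{E}|X-Y|=\frac{L(L+2)}{3(L+1)}$ for independent uniforms one gets $2L+(L-1)\frac{L(L+2)}{3(L+1)}=\frac{L(L^2+7L+4)}{3(L+1)}$, the exact closed form, and the stated expansion follows by division. This is more elementary than the paper's route (no symbolic differentiation of a matrix power), and it makes the answer transparent as $L-1$ interior gap expectations plus two boundary terms. Finally, your concern about differentiating an asymptotic relation --- $P_L\sim g_L$ does not imply $t\frac{d}{dt}\log P_L\sim t\frac{d}{dt}\log g_L$ --- is legitimate and is passed over in silence by the paper's one-sentence proof; your Vitali/Cauchy-estimate fix is only sketched (one would need locally uniform convergence of $P_L/g_L$ on a complex neighbourhood, which \cref{thm:main} as stated does not provide), but on this point your treatment is at least as careful as the paper's, which simply differentiates the asymptotic formulas term by term.
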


Parts (ii) and (iii) of \cref{lem:mean_length} follow by applying~\eqref{eqn:meanlength} to the respective results in \cref{thm:main}. Part (i) follows by applying~\eqref{eqn:meanlength} to~\eqref{eqn:pf_from_matrix}. See \cref{fig:expected_lengths} for plots of $\langle n \rangle_L$ for $t=\frac12$, $t=1$ and $t=2$.

\begin{figure}
\centering
\begin{subfigure}{0.49\textwidth}
\includegraphics[width=\textwidth]{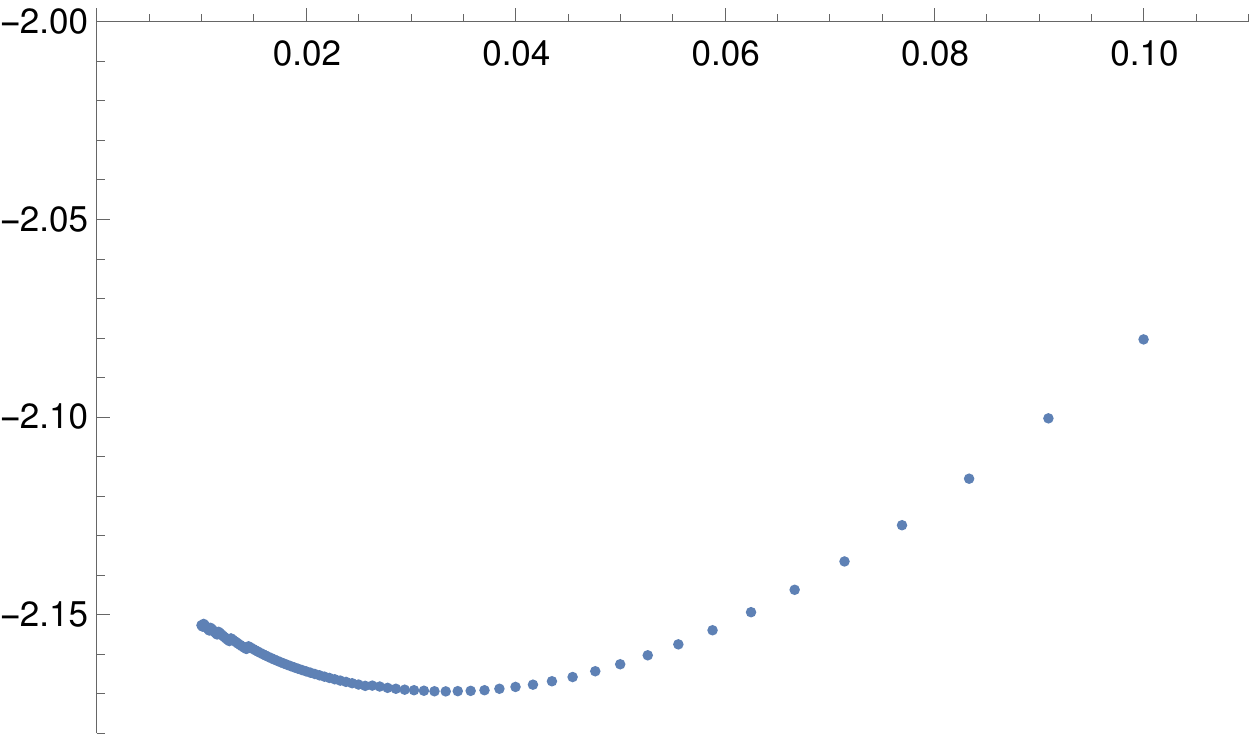}
\caption{}
\label{fig:t0p5_expected_length}
\end{subfigure}
\hfill
\begin{subfigure}{0.49\textwidth}
\includegraphics[width=\textwidth]{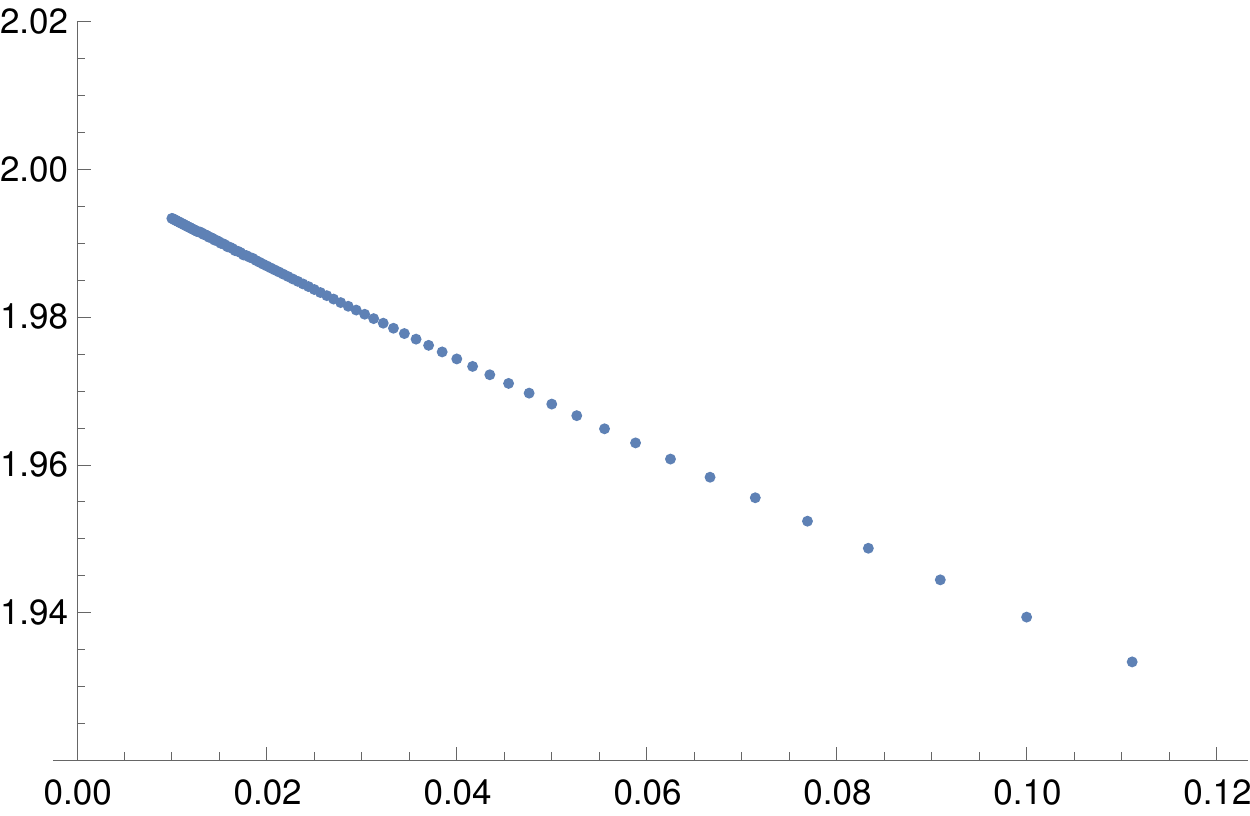}
\caption{}
\end{subfigure}

\begin{subfigure}{0.49\textwidth}
\includegraphics[width=\textwidth]{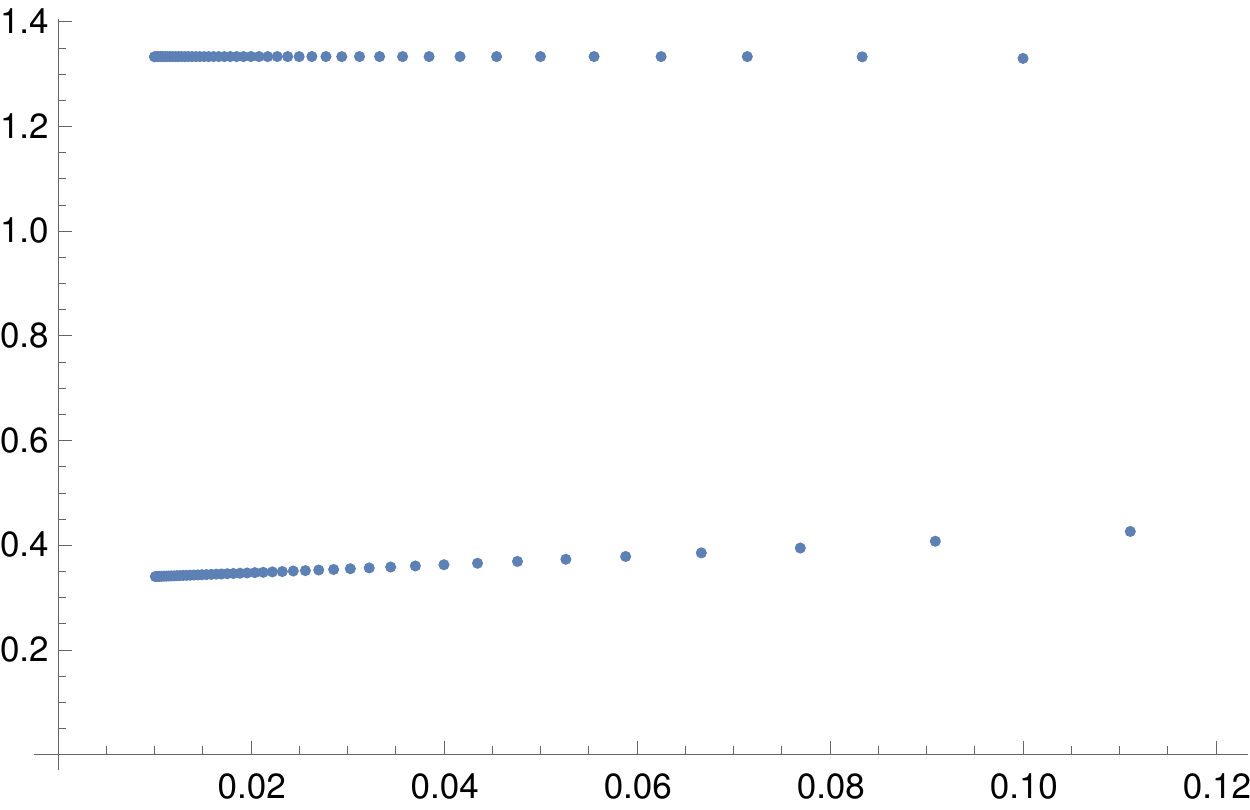}
\caption{}
\label{fig:t2_expected_length}
\end{subfigure}
\caption{(a) Plot of $\langle n\rangle_L-\frac{8L}{3}$ against $\frac1L$ at $t=\frac12$ for $L$ up to 100. The points are approaching $-\frac{8t^2}{1-t^4} = -\frac{32}{15}$. (b) Plot of $\frac1L(\langle n \rangle_L - \frac{L^2}{3})$ against $\frac1L$ at $t=1$ for $L$ up to 100. (c) Plot of $\frac1L(\langle n \rangle_L - L^2)$ against $\frac1L$ at $t=2$ for $L$ up to 100. For even $L$ the points are approaching $\frac{2(t^2-2)}{t^2-1} = \frac43$ and for odd $L$ they are approaching $\frac{t^2-3}{t^2-1} = \frac13$.}
\label{fig:expected_lengths}
\end{figure}

\section{The unweighted case \texorpdfstring{$t=1$}{t=1}}\label{sec:unweighted}

When $t=1$ we are simply interested in counting the number of PDWs in the box. Let $\mathcal{P}_L$ be the set of PDWs which cross the $L\times L$ box from the bottom left corner to the top right corner. Then there is a simple bijection between $\mathcal{P}_L$ and the set
\begin{equation}
\mathcal{W}_L := \{(w_1,w_2,\dots,w_L) \in \mathbb{Z}^L \,:\, 0 \leq w_i \leq L\},
\end{equation}
where we encode a PDW by the heights of its horizontal steps, reading left to right.

Clearly
\begin{equation}
|\mathcal{W}_L| = P_L(1) = (L+1)^L.
\end{equation}

We thus have neither $\lambda_1^L$ nor $\lambda_2^{L^2}$ growth, but instead something in between, namely
\begin{equation}
P_L(1) = e\cdot e^{L \log L} \left(1 - \frac{1}{2L} + \frac{11}{24L^2} + O(L^{-3})\right)
\end{equation}
which establishes \cref{thm:main} (i).

Note that this method is of no use when computing $\langle n \rangle_L$ at $t=1$. To do this we use the expression~\eqref{eqn:pf_from_matrix}, taking its derivative and setting $t=1$.

\section{The dilute case \texorpdfstring{$t<1$}{t<1}}\label{sec:dilute}

\subsection{Computing generating functions}

For the dilute case we will compute the generating function using the kernel method and derive the asymptotics using the saddle point method. We generalise from PDWs crossing a box to PDWs in a strip, i.e.\ $S_L = \{(x,y) \in \mathbb{Z}^2 \, : \, 0 \leq y \leq L\}$. The walks all start at $(0,0)$. We use three generating functions:
\begin{itemize}
\item $H(t,s,v) \equiv H(v)$: Counts the empty walk and walks ending with a horizontal step, with $t$ conjugate to length, $s$ conjugate to horizontal span (i.e.\ number of horizontal steps) and $v$ conjugate to the height of the endpoint.
\item $U(t,s,v) \equiv U(v)$: Counts walks ending with an up step.
\item $D(t,s,v) \equiv D(v)$: Counts walks ending with a down step.
\end{itemize}
Then by appending one step at a time, we have the functional equations
\begin{align}
	H(v) &= 1 + ts\left(H(v) + U(v) + D(v)\right), \\
	U(v) &= tv\left(H(v) + U(v)\right) - tv\left(v^L[v^L]H(v) + v^L[v^L]U(v)\right), \\
	D(v) &= t\overline{v}\left(H(v) + D(v)\right) - t\overline{v}\left([v^0]H(v) + [v^0]D(v)\right).
\end{align}
Additionally by considering the bottom and top boundaries, we have
\begin{align}
	[v^0]H(v) &= 1 + ts\left([v^0]H(v) + [v^0]D(v)\right) \\
	[v^L]H(v) &= ts\left([v^L]H(v) + [v^L]U(v)\right).
\end{align}
Combining all the above and eliminating all the $U$ and $D$ terms gives
\begin{equation}\label{eqn:main_func_eqn}
	\left(1-ts+\frac{t^2s}{t-v} - \frac{t^2sv}{1-tv}\right)H(v) = 1 - \frac{t}{t-v} + \frac{t}{t-v}H_0 - \frac{tv^{L+1}}{1-tv}H_L
\end{equation}
where $H_0 = [v^0]H(v)$ and $H_L = [v^L]H(v)$.

We now apply the \emph{kernel method} to solve this equation. The kernel is
\begin{equation}
K(t,s,v) \equiv K(v) = 1-ts+\frac{t^2s}{t-v} - \frac{t^2sv}{1-tv}
\end{equation}
which has two roots in $v$, namely
\begin{align}
	v = V &= \frac{1 - ts +t^2 + t^3s - \sqrt{(1-ts+t^2+t^3s)^2 - 4t^2}}{2t} \\
	&= t + st^2 + s^2t^3 + s^3t^4 + (s^2+s^4)t^5 + \cdots
\end{align}
and
\begin{equation}
	v = V^{-1} = t^{-1} - s - s^2t^3 - s^3t^4 - (s^2+s^4)t^5 + \cdots 
\end{equation}
Since $H(v)$ has only finite powers of $v$ (namely, $v^0$ to $v^L$), both of the kernel roots can be substituted into~\eqref{eqn:main_func_eqn} with $H(v)$ still being a well-defined (Laurent) series in $t$. We thus cancel the LHS and get a pair of equations with unknowns $H_0$ and $H_L$, which can be solved. We get
\begin{equation}\label{eqn:HL_solution}
	H_L = \frac{V^L(t-V)(1-tV)(1-V^2)}{t((t-V)^2 V^{2L+2}-(1-tV)^2)}
\end{equation}
and similar for $H_0$.

\subsection{Extracting coefficients}

We know that for any fixed $L$, $H_L$ is a rational function, though the exact way in which all the square roots cancel from~\eqref{eqn:HL_solution} is far from obvious. To get PDWs crossing a box, we want
\begin{equation}
	P_L(t) = t^{-1}[s^{L+1}]H_L.
\end{equation}
Since $H_L$ is rational, it is meromorphic in the complex $s$ plane for any real (or complex) $t$. So we have
\begin{equation}
	P_L(t) = \frac{1}{2\pi it}\oint \frac{H_L}{s^{L+2}}ds
\end{equation}
where the contour integral is a simple closed curve around the origin.

The form of \eqref{eqn:HL_solution} is not particularly conducive to computing the above contour integral. Let us rewrite it slightly as
\begin{align}
	H_L = \frac{V^L(t-V)(1-V^2)}{-t(1-tV)}\cdot \frac{1}{1-(t-V)^2V^{2L+2}/(1-tV)^2}.
\end{align}
In taking the contour integral we may assume that $|s|$ is small (the exact radius will be determined shortly) so that $|V|$ is close to $t$. Then
\begin{equation}
	\left|\frac{(t-V)^2V^{2L+2}}{(1-tV)^2}\right| \sim |s^2|t^{2L+6}
\end{equation}
for large $L$. This is thus small, and so we can approximate $H_L$ as
\begin{equation}
	H_L \sim H_L^* = \frac{V^L(t-V)(1-V^2)}{-t(1-tV)}.
\end{equation}
However, we now have a problem. $H_L$ was a rational (i.e.\ meromorphic) function but $H_L^*$ is not. So there may now be branch cuts to contend with. These arise from the square root term in $V$, which is
\begin{equation}
	\sqrt{(1-ts+t^2+t^3s)^2 - 4t^2}
\end{equation}
The term inside the square root is 0 at 
\begin{equation}
	s_1=\frac{1-t}{t(1+t)} \qquad \text{and} \qquad s_2=\frac{1+t}{t(1-t)}.
\end{equation}
We have $0 < s_1 < s_2$ for $t\in(0,1)$, with $s_1 \to 0$ as $t\to 1$. See \cref{fig:s1_s2}.

\begin{figure}[h]
	\centering
    \begin{subfigure}{0.49\textwidth}
	\includegraphics[width=\textwidth]{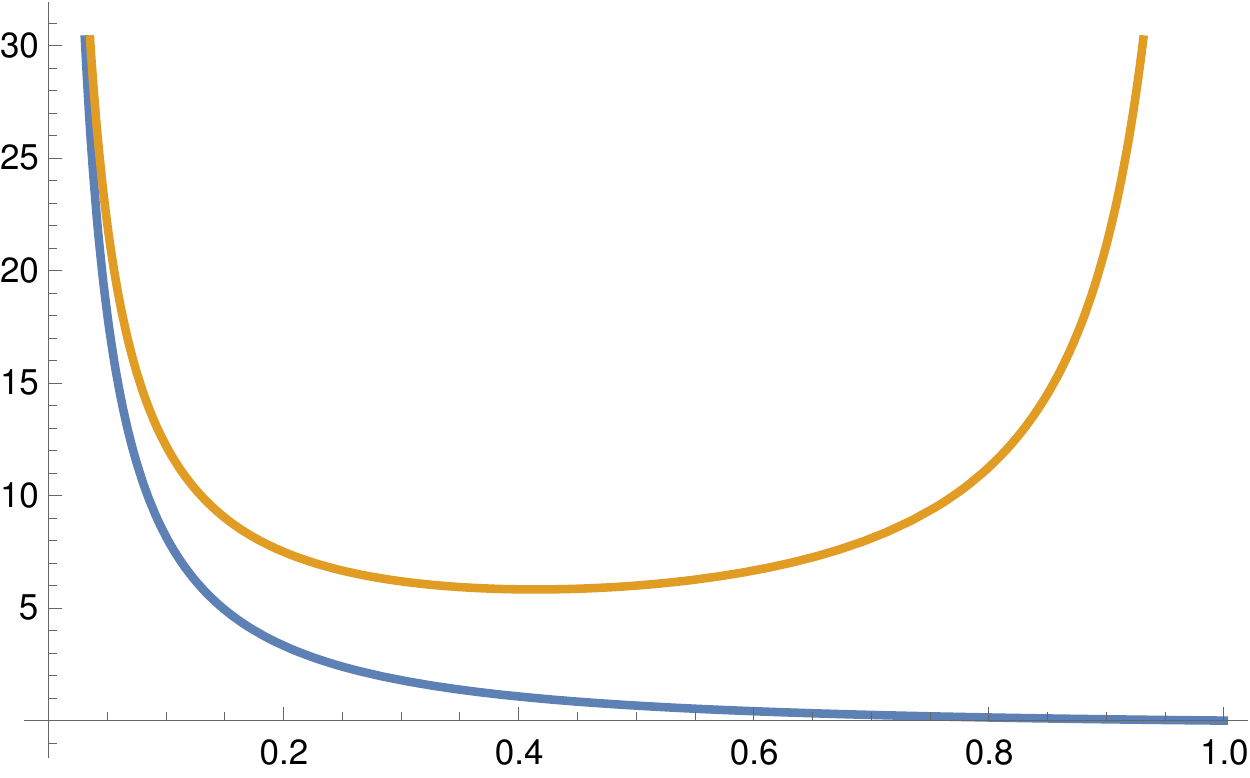}
 \caption{}
 \label{fig:s1_s2}
 \end{subfigure}
 \hfill
 \begin{subfigure}{0.49\textwidth}
 \includegraphics[width=\textwidth]{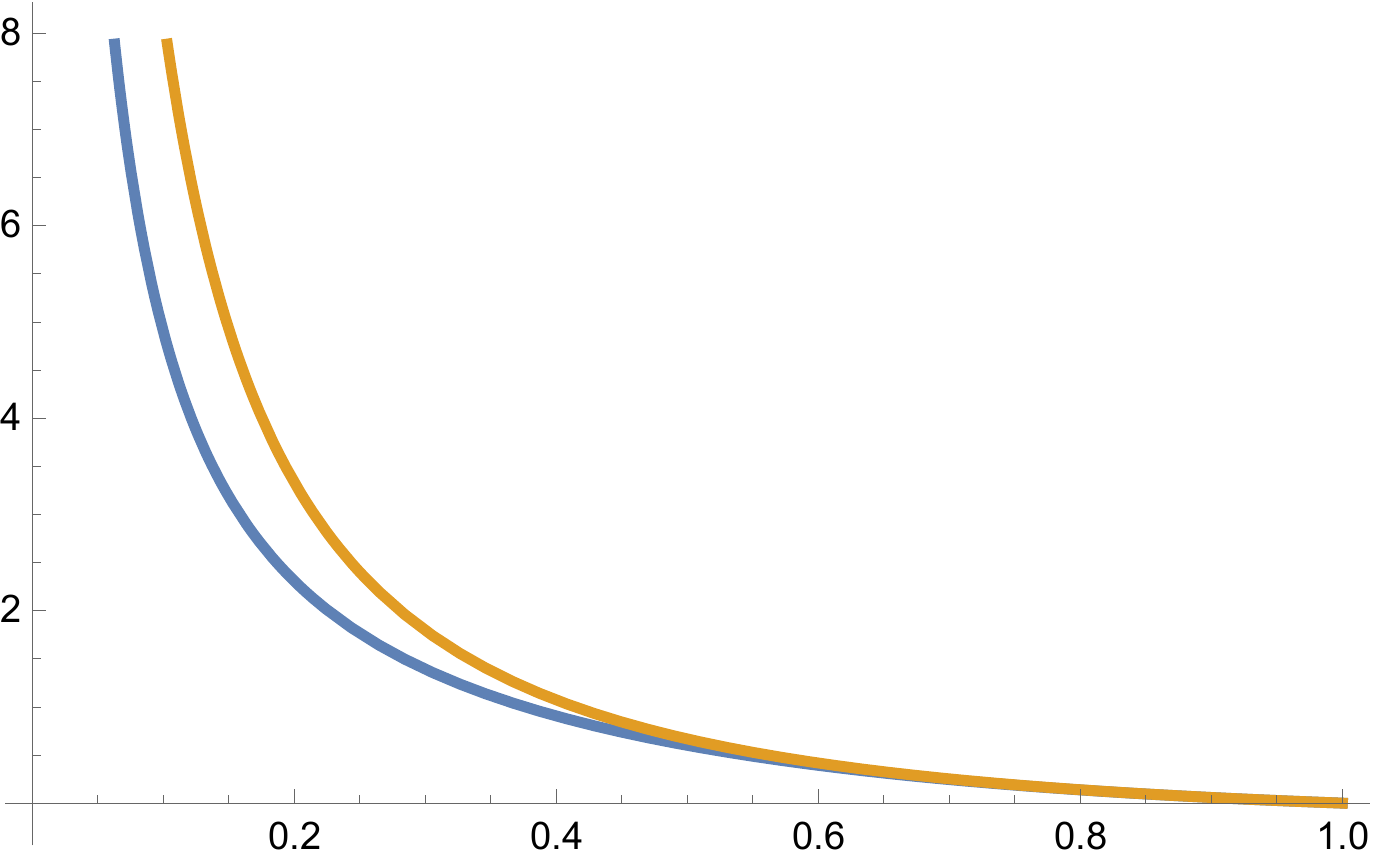}
 \caption{}
 \label{fig:s0_s1}
 \end{subfigure}
\caption{(a) Plot of $s_1$ (blue) and $s_2$ (orange). (b) Plot of $s_0$ (blue) and $s_1$ (orange).}	
\end{figure}

The term inside the square root is negative for $s_1 < s < s_2$ and positive (for real $s$) for $s<s_1$ and $s>s_2$. We may thus place the branch cut along the real axis between $s_1$ and $s_2$, and as long as our contour integral is along a curve with $|s| < s_1$ then we avoid the branch cut.

Next we need to check if $H^*_L$ has any poles that we need to take into consideration. From the form of $V$ we can see that the numerator presents no problem. For the denominator we need to check only $(1-tV)$, but a bit of rearranging shows that this has no roots in $s$.

So it remains to compute the asymptotics of
\begin{equation}\label{eqn:the_contour_integral}
	P_L^*(t) = \frac{1}{2\pi it}\oint \frac{H_L^*}{s^{L+2}}ds = \frac{1}{2\pi it} \oint \frac{V^L(t-V)(1-V^2)}{-t(1-tV)s^{L+2}} ds
\end{equation}
where the contour has to be within $|s| < s_1$.

\subsection{Asymptotics via the saddle point method}

The most basic form of the saddle point method gives
\begin{equation}
	\int g(z) \exp(n h(z)) dz \sim i\sqrt{\frac{2\pi}{n h''(z_0)}} g(z_0) \exp(nh(z_0)), \qquad n\to\infty
\end{equation}
where $z_0$ is a saddle point of $h(z)$.

The form~\eqref{eqn:the_contour_integral} is well set up for estimation using the saddle point method. The dependence on $L$ is from
\begin{equation}
	\left(\frac{V}{s}\right)^L = \exp(Lh(s))
\end{equation}
where $h(s) = \log V - \log s$. $h$ has a saddle point at 
\begin{equation}
	s_0 = \frac{1-t^2}{2t(1+t^2)}
\end{equation}
It is straightforward to check that $0 < s_0 < s_1$ for $0<t<1$ (see \cref{fig:s0_s1}). Both $s_0, s_1 \to 0$ as $t\to1$. 


For us
\begin{equation}
	g(s) = \frac{(t-V)(1-V^2)}{-t(1-tV)s^2}
\end{equation}
Substituting,
\begin{equation}
	g(s_0) = 4t^2.
\end{equation}
Meanwhile
\begin{align}
	\exp(h(s_0)) &= \frac{4t^2}{1-t^2} \\
	h''(s_0) &= \frac{8t^2(1+t^2)^4}{(1-t^2)^4}
\end{align}
Putting this all together,
\begin{align}
	P_L^*(t) &\sim \frac{1}{2\pi i t} \cdot i 4t^2 L^{-1/2} \sqrt{2\pi \cdot \frac{(1-t^2)^4}{8t^2(1+t^2)^4}} \left(\frac{4t^2}{1-t^2}\right)^L \\
	&= \frac{1}{\sqrt{\pi}} \cdot \left(\frac{1-t^2}{1+t^2}\right)^2 \cdot L^{-1/2} \cdot \left(\frac{4t^2}{1-t^2}\right)^L \label{eqn:GL_final_approx}
\end{align}
as in \cref{thm:main} (ii).

\section{The dense case \texorpdfstring{$t>1$}{t>1}}\label{sec:dense}

\subsection{Transfer matrix formulation and Bethe ansatz solution}

For the dense case we must use a completely different method to compute asymptotics, using a transfer matrix approach. Define the $(L+1)\times(L+1)$ matrix
\begin{equation}
T_L(t) = \begin{pmatrix} t & t^2 & t^3 & \cdots & t^{L+1} \\ t^2 & t & t^2 & \cdots & t^L \\ t^3 & t^2 & t & \cdots & t^{L-1} \\ \vdots & \vdots & \vdots & \ddots & \vdots \\ t^{L+1} & t^L & t^{L-1} & \cdots & t \end{pmatrix}
\end{equation}
Then
\begin{align}
	P_L(t) &= (1, t, t^2, \dots, t^L) \cdot T_L(t)^L \cdot (0, 0, \dots, 0, 1)^\mathsf{T} \label{eqn:pf_from_matrix}\\
	&= \frac1t(1,0,0,\dots,0) \cdot T_L(t)^{L+1} \cdot (0,0,\dots,0,1)^\mathsf{T}.
\end{align}

For brevity, in the following  we may drop subscripts or functional arguments. Let us consider the eigen-equation
\begin{equation}
T\mathbf{g} = \lambda\mathbf{g},
\end{equation}
that is
\begin{equation}
\sum_{j=1}^{L+1}T_{i,j}g_j = \lambda g_i \qquad \text{for } i=1,\dots,L+1.
\end{equation}
We begin with the ansatz $g_j = z^j$ for some complex number $z$, giving
\begin{equation}
\sum_{j=1}^{L+1}t^{|i-j|+1}z^j = \lambda z^i.
\end{equation}
Splitting the sum gives
\begin{equation}
\sum_{j=1}^i t^{(i-j)+1}z^j + \sum_{j=i+1}^{L+1} t^{(j-i)+1}z^j = \lambda z^i
\end{equation}
or rather
\begin{equation}
zt^i\sum_{k=0}^{i-1}\left(\frac{z}{t}\right)^k + t^{1-i}(tz)^{i+1}\sum_{k=0}^{L-i}(tz)^k = \lambda z^i.
\end{equation}
Summing the partial geometric series, we find
\begin{equation}
zt^i\left[\frac{1-(\frac{z}{t})^i}{1-\frac{z}{t}}\right] + z^{i+1}t^2\left[\frac{1-(tz)^{L-i+1}}{1-tz}\right] = \lambda z^i.
\end{equation}
Collecting terms gives
\begin{equation}
t\left[\frac{t^{i-1}z}{1-\frac{z}{t}} - \frac{t^{L+2-i}z^{L+2}}{1-tz}\right] + tz^i\left[\frac{tz}{1-tz}-\frac{\frac{z}{t}}{1-\frac{z}{t}}\right] = \lambda z^i.
\end{equation}
Since this needs to hold for all $i$, we obtain the eigenvalue $\lambda$ as
\begin{equation}
\lambda = \lambda(t,z) = t\left[\frac{tz}{1-tz}-\frac{\frac{z}{t}}{1-\frac{z}{t}}\right] = -\frac{z(1-t^2)}{(1-tz)(1-\frac{z}{t})} = \frac{t(1-t^2)}{1-t(z+\frac{1}{z})+t^2}.
\end{equation}

We immediately note that
\begin{equation}
\lambda(t,z) = \lambda(t,\textstyle\frac{1}{z})
\end{equation}
and so to remove the boundary terms we extend the ansatz to
\begin{equation}
g_j = z^j + C(t,z)z^{-j}.
\end{equation}
The same $\lambda$ as above still works, and cancels the $z^i$ and $z^{-i}$ terms. We are left with the boundary equation
\begin{equation}
	t\left[ \frac{t^{i-1}z}{1- \frac{z}{t}} - \frac{t^{L+2-i}z^{L+2}}{1-tz} \right] + C t\left[ \frac{t^{i-1}}{z(1- \frac{1}{tz})} - \frac{t^{L+2-i}z^{-(L+2)}}{1-\frac{t}{z}} \right] =0 
\end{equation}
which after multiplying by $t^{i-1}$ we rewrite as
\begin{equation}
	\left[ \frac{t^{2i-1}z}{1- \frac{z}{t}} - \frac{t^{L+2}z^{L+2}}{1-tz} \right] - C \left[ \frac{t^{2i}}{1- tz} + \frac{t^{L+2}z^{-(L+2)}}{1-\frac{t}{z}} \right] =0 
\end{equation}
that is
\begin{equation}
	t^{2i}\left[ \frac{\frac{z}{t}}{1- \frac{z}{t}} - C \frac{1}{1- tz} \right] -  t^{L+2}\left[ \frac{z^{L+2}}{1-tz}  + C \frac{z^{-(L+2)}}{1-\frac{t}{z}} \right] =0.
\end{equation}
This must hold for each $i$ so we expect each term to be zero individually. We seek to set $C$ so that there is a common factor between the two, which can then be cancelled by $z$. Comparing the two terms, we see that $C=\pm z^{L+2}$ will make them the same, up to a simple factor. First with $C=z^{L+2}$, the above becomes
\begin{equation}
	t^{2i}\alpha(t,z) + t^{L+2}\alpha(t,z) = 0
\end{equation}
where
\begin{align}
	\alpha(t,z) &= \frac{\frac{z}{t}}{1-\frac{z}{t}}-\frac{z^{L+2}}{1-zt} \\ 
	&= -\frac{1}{(1-tz)(1-\frac{t}{z})}\cdot (1-tz-tz^{L+1}+z^{L+2}).
\end{align}
On the other hand with $C=-z^{L+2}$, we get
\begin{equation}
	t^{2i}\beta(t,z) - t^{L+2}\beta(t,z) = 0
\end{equation}
where
\begin{align}
	\beta(t,z) &= \frac{\frac{z}{t}}{1-\frac{z}{t}} + \frac{z^{L+2}}{1-zt} \\ 
	&= -\frac{1}{(1-tz)(1-\frac{t}{z})}\cdot (1-tz+tz^{L+1}-z^{L+2}).
\end{align}

The above thus  gives that the eigenvectors $\mathbf{g}_{L,k}$ of $T_L$, where $k=1,\dots,L+1$, are of the form
\begin{equation}\label{eqn:eigenvectors}
g_{L,k,j} = z^j +(-1)^{k+1} z^{L+2-j} \text{ with } j=1,\dots,L+1
\end{equation}
where the $z=z_{L,k}$ are complex numbers. Specifically, the $z_{L,k}$ are roots of the polynomials $A_{L,k}(t,z)$, which combine $\alpha$ and $\beta$ from above:
\begin{equation}
A_{L,k}(t,z) = 1-tz+(-1)^k z^{L+1}(t-z).
\end{equation}
Note that
\begin{equation}\label{eqn:A_palindromic}
z^{L+2}A_{L,k}(t,\textstyle\frac1z) = (-1)^{k+1}A_{L,k}(t,z)
\end{equation}
so that if $z$ is a root then so too is $\frac1z$. The property~\eqref{eqn:A_palindromic} makes $A_{L,k}$ a \emph{self-inversive} polynomial, and in particular it is \emph{palindromic} for odd $k$, and \emph{antipalindromic} for even $k$. Since the roots come in reciprocal pairs, in the following $z_{L,k}$ can refer to either representative of a pair (it will make no difference which value is chosen).

Next, we observe that $A_{L,k}$ is of degree $L+2$, however
\begin{itemize}
\item when $L,k$ are both odd, $A_{L,k}(t,-1)=0$, but then at $z=-1$ we have $g_{L,k,j}=0$ for all $j$,
\item when $L$ is odd and $k$ is even, $A_{L,k}(t,1) = 0$, but then at $z=1$ we again have $g_{L,k,j}=0$,
\item when $L,k$ are both even, $A_{L,k}(t,1) = A_{L,k}(t,-1) = 0$, but then at $z=\pm1$ we again have $g_{L,k,j} = 0$.
\end{itemize}
(Note that $A_{L,k}$ never has a double pole at $z=\pm 1$, which is easily seen by checking derivatives.) The roots at $z=\pm1$ are thus trivial and are not counted among the $z_{L,k}$. Factoring out the trivial $(1\pm z)$ terms then gives the polynomials
\begin{align}
B_{L,k}(t,z) &= 1+(1+t)\sum_{n=1}^L(-1)^nz^n + z^{L+1} & & L,k \text{ odd} \\
B_{L,k}(t,z) &= 1 + (1-t)\sum_{n=1}^L z^n+ z^{L+1} & & L\text{ odd, } k \text{ even} \\
B_{L,k}(t,z) &= 1 -(t-z) \sum_{\substack{n=1 \\ n \text{ odd}}}^{L-1}z^n & & L,k \text{ even} \\
B_{L,k}(t,z) &= 1-tz-tz^{L+1} + z^{L+2} & & L\text{ even, } k \text{ odd}
\end{align}
whose roots are exactly the reciprocal pairs $z_{L,k}$. It is easy to check that each of the $B_{L,k}(t,z)$ are palindromic. Indeed, setting $z = \frac{1}{z}$ in~\eqref{eqn:eigenvectors} leads to\begin{equation}
z^{L+2}\left[\mathbf{g}_{L,k}\right]_{z=\frac1z} = (-1)^{k+1}\mathbf{g}_{L,k}
\end{equation}
so that each reciprocal pair of roots gives the same eigenvalue / vector pair. 

Next we diagonalise (using the fact that $T_L$ is real symmetric), to get
\begin{equation}
G_L(t) = \frac1t (1,0,\dots,0) \cdot \left(\sum_{k=1}^{L+1}\tilde{\mathbf{g}}_{L,k}^\mathsf{T}\lambda_{L,k}^{L+1}\tilde{\mathbf{g}}_{L,k}\right)\cdot(0,\dots,0,1)^\mathsf{T}
\end{equation}
where
\begin{equation}
\tilde{\mathbf{g}}_{L,k} = \frac{\mathbf{g}_{L,k}}{\lVert \mathbf{g}_{L,k}\rVert}.
\end{equation}
Now
\begin{align}
\lVert\mathbf{g}_{L,k}\rVert^2 &= \sum_{j=1}^{L+1}(z_{L,k}^j + (-1)^{j+1}z_{L,k}^{L+2-j})^2 \\
&= \frac{2z_{L,k}^2(1+(-1)^{k+1}(L+1)z_{L,k}^L(1-z_{L,k}^2)-z_{L,k}^{2L+2})}{1-z_{L,k}^2}.
\end{align}
Substituting,
{\small
\begin{align}
P_L(t) &= \frac1t (1,0,\dots,0) \cdot \left(\sum_{k=1}^{L+1} \frac{\mathbf{g}_{L,k}^{\mathsf{T}}\lambda_{L,k}^{L+1}\mathbf{g}_{L,k}}{\lVert \mathbf{g}_{L,k}\rVert^2}\right) \cdot (0,\dots,0,1)^\mathsf{T} \\
&= \frac1t (1,0,\dots,0)\cdot\left(\sum_{k=1}^{L+1}\mathbf{g}_{L,k}^\mathsf{T}\mathbf{g}_{L,k}\frac{(1-z_{L,k}^2)\lambda_{L,k}^{L+1}}{2z_{L,k}^2(1+(-1)^{k+1}(L+1)z_{L,k}^L(1-z_{L,k}^2)-z_{L,k}^{2L+2})}\right)\cdot(0,\dots,0,1)^\mathsf{T} \\
&= \frac1t \sum_{k=1}^{L+1}\frac{g_{L,k,1}g_{L,k,L+1}(1-z_{L,k}^2)\lambda_{L,k}^{L+1}}{2z_{L,k}^2(1+(-1)^{k+1}(L+1)z_{L,k}^L(1-z_{L,k}^2)-z_{L,k}^{2L+2})} \\
&= \frac{(1-t^2)^{L+1}}{2t}\sum_{k=1}^{L+1} \frac{(-1)^{k+1}(1+(-1)^{k+1}z_{L,k}^L)^2(1-z_{L,k}^2)z_{L,k}^{L+1}}{(1+(-1)^{k+1}(L+1)z_{L,k}^L(1-z_{L,k}^2)-z_{L,k}^{2L+2})(z_{L,k}-t)^{L+1}(\frac1t-z_{L,k})^{L+1}}. \label{eqn:GL_sum_formulation}
\end{align}
}
We again note that the above sum is over the $L+1$ reciprocal pairs of roots, and for each $k$ it does not matter which of the pair is chosen. In the following subsection we will make things more explicit.

\subsection{The roots for \texorpdfstring{$t>1$}{t>1}}\label{sec:roots}

The asymptotics of~\eqref{eqn:GL_sum_formulation} depend on the values of the complex numbers $z_{L,k}$. There are $L+1$ (pairs) of these; however, it turns out that for $t>1$ only two of them contribute to the dominant asymptotics. This is partly because of the following remarkable fact.

\begin{lem}\label{lem:real_roots}
For $t>1$ and $L > \frac{2}{t-1}$, $L-1$ of the reciprocal pairs of roots $z_{L,k}$ are on the unit circle, and two pairs (one for even $k$ and one for odd $k$) are real, positive and not on the unit circle.
\end{lem}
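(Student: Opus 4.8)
The plan is to combine the real-symmetry of $T_L$ with a phase-counting (Bethe-ansatz) argument. First I would recast the root equation $A_{L,k}(t,z)=0$ in multiplicative form by isolating the $z^{L+1}(t-z)$ term:
\[
z^{L+1} = (-1)^k f(z), \qquad f(z) := \frac{tz-1}{t-z}.
\]
The elementary but decisive fact is that $f$ maps the unit circle to itself: for $|z|=1$ one has $|tz-1|^2 = |t-z|^2 = t^2 - t(z+\bar z) + 1$, so $|f(z)|=1$. Both sides of the equation then have modulus $1$ on $|z|=1$, which is what makes the circle roots countable by a phase argument.

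Next comes the reality dichotomy. Since $T_L$ is real symmetric, every eigenvalue is real, and each root $z=z_{L,k}$ produces the eigenvalue $\lambda(t,z) = t(1-t^2)/(1 - t(z+\tfrac1z) + t^2)$. For $t>1$ the numerator is nonzero, and one checks that the denominator never vanishes at a root (its zeros are $z=t,\tfrac1t$, which are not roots of $A_{L,k}$), so $\lambda$ is a finite real number; hence $z+\tfrac1z\in\mathbb{R}$. Writing $z=re^{i\theta}$ gives $z+\tfrac1z = (r+\tfrac1r)\cos\theta + i(r-\tfrac1r)\sin\theta$, so reality forces $r=1$ or $\sin\theta=0$. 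Thus \emph{every} root lies either on the unit circle or on the real axis, and it suffices to count the circle roots.

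For that count I would set $z=e^{i\theta}$ and write $f(e^{i\theta})=e^{i\phi(\theta)}$; an explicit computation gives $\phi(\theta) = 2\arctan\!\bigl(\tfrac{t+1}{t-1}\tan\tfrac\theta2\bigr)$, increasing monotonically from $\phi(0)=0$ to $\phi(\pi)=\pi$. The circle roots are exactly the solutions in $\theta\in(0,\pi)$ of $g(\theta):=(L+1)\theta-\phi(\theta)\equiv 0\pmod{2\pi}$ (even $k$) or $\equiv\pi\pmod{2\pi}$ (odd $k$), with $\theta$ and $-\theta$ forming a reciprocal pair. The main obstacle is proving that $g$ is strictly increasing: since $\phi'$ attains its maximum $\tfrac{t+1}{t-1}$ at $\theta=0$, we have $g'=(L+1)-\phi'>0$ on $[0,\pi]$ precisely when $L+1>\tfrac{t+1}{t-1}$, i.e.\ when $L>\tfrac{2}{t-1}$---exactly the hypothesis, which enters here and nowhere else. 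Granting monotonicity, $g$ runs from $g(0)=0$ to $g(\pi)=L\pi$, and a direct lattice count of the values $2\pi m$ and $(2m+1)\pi$ met on the open interval gives $L-1$ pairs on the circle in total, for both parities of $L$; the endpoints $\theta=0,\pi$ reproduce exactly the trivial roots $z=\pm1$ already discarded.

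Finally, since there are $L+1$ reciprocal pairs in all and $L-1$ lie on the circle, the remaining two pairs must be real by the dichotomy, and bookkeeping the circle count against the degrees of the even- and odd-$k$ polynomials shows one such pair comes from each parity. To see they are positive I would rule out nontrivial negative real roots: substituting $z=-u$ with $u>0$ collapses both parities either to an impossibility (the sign of $z^{L+1}$ conflicts with that of $\pm f$) or to $u^{L+2}+tu^{L+1}-tu-1=0$, whose single coefficient sign change forces, by Descartes' rule, a unique positive root at $u=1$ (the discarded $z=-1$). Hence the two off-circle real pairs are positive; locating the odd-$k$ one explicitly---$A_{L,k}(t,t)=1-t^2<0$ while $A_{L,k}(t,z)\to+\infty$ as $z\to\infty$, giving a root in $(t,\infty)$---confirms it is genuinely off the circle. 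This establishes the lemma, the strict monotonicity of $g$ and its sharp dependence on $L>\tfrac{2}{t-1}$ being the crux.
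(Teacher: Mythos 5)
Your proof is correct, but it takes a genuinely different route from the paper's. The paper first invokes a theorem of Vieira on self-inversive polynomials (\cref{lem:vieira}, applied with $l=1$, where the coefficient condition reduces to exactly $t>1$) to conclude that each of the odd-$k$ and even-$k$ polynomials has at most one reciprocal pair off the unit circle; it then exhibits the real positive pair for each parity by intermediate-value arguments applied to $z^{L+1}=\pm m(z)$ on $(0,1)$, with the hypothesis $L>\frac{2}{t-1}$ entering only locally, to force the even-$k$ crossing near $z=1$ (slope $L+1$ beating slope $\frac{t+1}{t-1}$). You instead obtain a circle-or-real dichotomy from spectral reality---$T_L$ real symmetric forces $\lambda(t,z)\in\mathbb{R}$, hence $z+\frac1z\in\mathbb{R}$---and then count the circle roots exactly via monotonicity of the phase $g(\theta)=(L+1)\theta-\phi(\theta)$, with the same threshold $L+1>\frac{t+1}{t-1}$ now acting globally as the monotonicity condition, and with Descartes' rule disposing of negative real roots. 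Your route buys several things: it is self-contained (no external citation); it pins down the exact number and approximate locations of the circle roots in one stroke, and indeed your monotone $g$ immediately yields the spacing bound \eqref{eqn:theta_range_roots} (any circle root satisfies $\pi\le g(\theta)\le (L-1)\pi$ together with $(L+1)\theta-\pi\le g(\theta)\le(L+1)\theta$), which the paper proves separately by a six-case argument in the final subsection of \cref{sec:dense}; and the role of the hypothesis is conceptually cleaner. The one detail you leave implicit is essential to your dichotomy: a nontrivial root $z$ only forces $\lambda(t,z)$ to be real if the ansatz vector $g_j=z^j+(-1)^{k+1}z^{L+2-j}$ is a genuine (nonzero) eigenvector of $T_L$; this holds because that vector vanishes identically only when $z^2=1$, i.e.\ at the trivial roots, but the line should be stated (alongside the check you do make, that $z=t,\frac1t$ are never roots). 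With that addition your argument is complete, and arguably strengthens the paper by unifying this lemma with the root-spacing estimate needed later.
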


See \cref{fig:t1p2_zk} for an illustration at $t=\frac65$.

\begin{figure}
\centering
\includegraphics[width=0.5\textwidth]{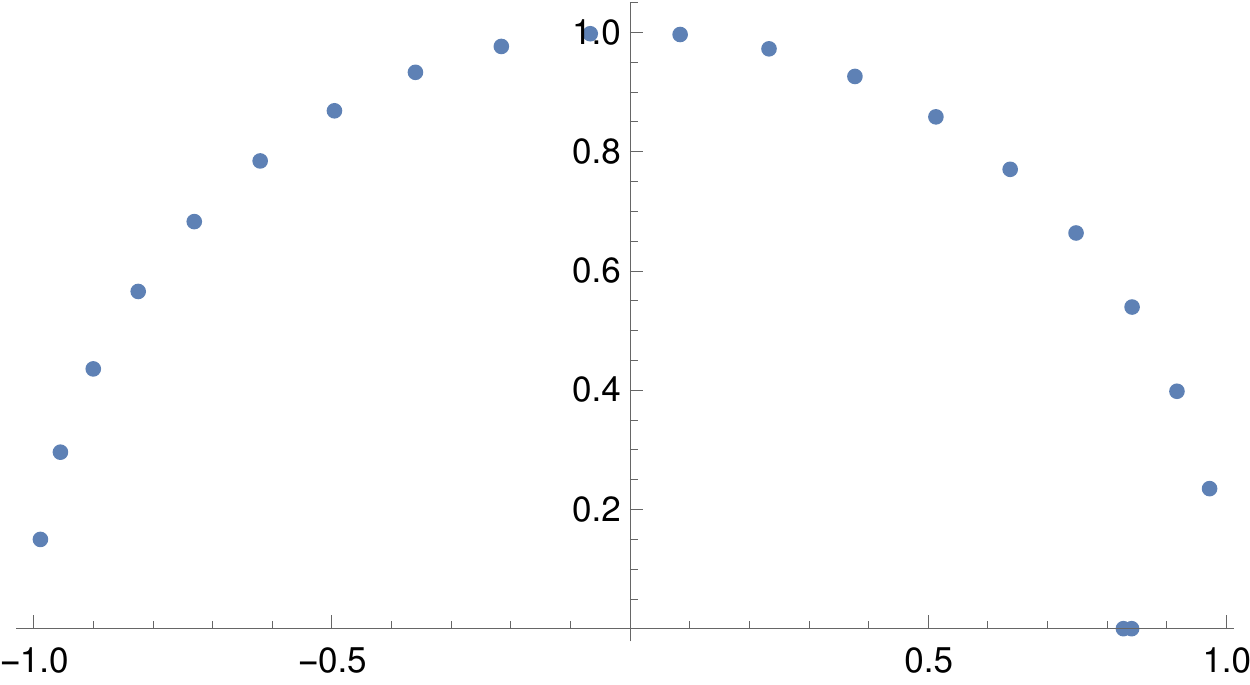}
\caption{The roots $z_{20,k}$ in the complex plane at $t=\frac65=1.2$. For the reciprocal pairs on the unit circle we have chosen those with positive imaginary part, and for those on the real line we have chosen those inside the unit circle. Note the two real roots $z_{20,1}$ and $z_{20,2}$ close to $\frac1t = \frac56$.}
\label{fig:t1p2_zk}
\end{figure}

We will make use of a result due to Vieira. First, we make precise a term we used in the previous subsection. A polynomial 
\begin{equation}
p(z) = a_0 + a_1 z+ \dots + a_n z^n
\end{equation}
with coefficients in $\mathbb{C}$ and with $a_n \neq 0$ is \emph{self-inversive} if it satisfies
\begin{equation}
p(z) = \omega z^n \overline{p}(\textstyle \frac1z)
\end{equation}
with $|\omega| = 1$, where $\overline{p}(z)$ is the complex conjugate of $p(z)$.

\begin{lem}[\cite{vieira_number_2017}]\label{lem:vieira}
Let $p(z) = a_0 + a_1z + \dots + a_nz^n$ be a self-inversive polynomial of degree $n$. If
\begin{equation}\label{eqn:vieira}
|a_{n-l}| > \frac12 \sum_{\substack{k=0 \\ k \neq l, n-l}}^n |a_k|, \qquad \qquad l < \frac{n}{2},
\end{equation}
then $p(z)$ has at least $n-2l$ roots on the unit circle.
\end{lem}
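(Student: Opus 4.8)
The plan is to reduce the claim to a statement about the number of sign changes of a single \emph{real} trigonometric polynomial on a circle, and then to count those sign changes by comparing a dominant cosine term against the sum of the remaining terms.

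First I would exploit self-inversivity to pass to a real function. On the unit circle $\overline z = 1/z$, so $\overline{p(z)} = \overline p(1/z)$, and the self-inversive relation $p(z) = \omega z^n \overline p(1/z)$ with $\omega = e^{i\psi}$ gives $\overline{p(z)} = \overline\omega\, z^{-n} p(z)$ for $|z| = 1$. Setting $z = e^{i\theta}$ and
$$R(\theta) := e^{-i\psi/2}\,e^{-in\theta/2}\,p(e^{i\theta}),$$
a short computation shows $\overline{R(\theta)} = R(\theta)$, i.e.\ $R$ is real-valued. Since the prefactor $e^{-i\psi/2}e^{-in\theta/2}$ never vanishes, the zeros of $R$ on a fundamental domain in $\theta$ correspond exactly to the roots of $p$ on the unit circle, and it suffices to show that $R$ has at least $n-2l$ zeros there.

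Next I would isolate the dominant part of $R$. Writing $R(\theta) = e^{-i\psi/2}\sum_{k=0}^n a_k e^{i(k - n/2)\theta}$, the two extreme surviving coefficients combine neatly: using the self-inversive identity $a_{n-l} = \omega\,\overline{a_l}$ (which also yields $|a_l| = |a_{n-l}|$), the $k=l$ and $k=n-l$ terms are complex conjugates of one another, so their sum is the real cosine
$$D(\theta) = 2|a_{n-l}|\cos(\mu\theta + c), \qquad \mu := \tfrac n2 - l > 0,$$
for an explicit phase $c$. The remaining terms form a real function $E(\theta) = R(\theta) - D(\theta)$ with $|E(\theta)| \le \sum_{k \neq l,\,n-l}|a_k|$, and the hypothesis $|a_{n-l}| > \tfrac12\sum_{k\neq l,\,n-l}|a_k|$ is precisely the statement that the amplitude $2|a_{n-l}|$ of $D$ strictly dominates this bound, so $|E(\theta)| < 2|a_{n-l}|$ for every $\theta$.

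Finally I would count zeros via the extrema of $D$. The cosine $D$ attains $+2|a_{n-l}|$ and $-2|a_{n-l}|$ alternately at $n-2l = 2\mu$ equally spaced extrema per period; because $|E| < 2|a_{n-l}|$, at each such extremum $R = D + E$ carries the sign of $D$. Consecutive extrema have opposite signs, so $R$ must vanish between them, yielding at least $n-2l$ zeros of $R$ and hence at least $n-2l$ roots of $p$ on the unit circle. The one genuinely delicate point — and the step I expect to be the main obstacle — is the bookkeeping when $n$ is odd: then $\mu$ is a half-integer, the factor $e^{-in\theta/2}$ is double-valued, and $R(\theta + 2\pi) = -R(\theta)$, so $R$ is \emph{antiperiodic} rather than periodic and cannot be treated as a function on the circle directly. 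I would resolve this by working on the double cover $\theta \in [0, 4\pi)$, where $R$ is genuinely $4\pi$-periodic and $D$ has $2(n-2l)$ extrema; antiperiodicity pairs up the zeros of $R$, so dividing by two recovers exactly $n-2l$ distinct roots on the unit circle, with the even-$n$ case appearing as the special case where no doubling is needed.
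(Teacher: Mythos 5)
The paper does not prove this lemma at all --- it is imported directly from Vieira \cite{vieira_number_2017} --- so there is no in-paper argument to compare against; your proposal must be judged on its own, and it is correct. It is, in essence, the argument of the cited source (generalising the Lakatos--Losonczi criterion): self-inversivity makes $R(\theta)=e^{-i\psi/2}e^{-in\theta/2}p(e^{i\theta})$ real-valued, the coefficient relation $a_{n-l}=\omega\overline{a_l}$ (hence $|a_l|=|a_{n-l}|$) turns the $k=l$ and $k=n-l$ terms into a single cosine of amplitude $2|a_{n-l}|$ and frequency $\tfrac{n}{2}-l$, the hypothesis~\eqref{eqn:vieira} makes that cosine strictly dominate the remaining terms uniformly in $\theta$, and sign alternation of $R$ at the $n-2l$ extrema per revolution forces the required zeros by the intermediate value theorem. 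Your handling of the delicate odd-$n$ case is also sound: the half-integer frequency makes $R$ only $2\pi$-antiperiodic, and either the double-cover count you give (at least $2(n-2l)$ zeros on $[0,4\pi)$, projecting at most two-to-one onto the circle), or the direct observation that antiperiodicity supplies the missing wrap-around sign change (since $n-2l$ is odd the first and last extrema in $[0,2\pi)$ share a sign, while $R(\theta_1+2\pi)=-R(\theta_1)$), closes the argument.
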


\begin{proof}[Proof of \cref{lem:real_roots}]
The polynomials $A_{L,k}(t,z)$ and $B_{L,k}(t,z)$ are self-inversive with $\omega = (-1)^{k+1}$ and $\omega=1$ respectively. For now it is simpler to work with the $A_{L,k}$, keeping in mind the two trivial roots at $z=\pm1$.

Take $p(z) = A_{L,k}(t,z)$ and set $l=1$ in \cref{lem:vieira}. Then the condition~\eqref{eqn:vieira} is simply $t>1$, so at least $L$ of the $L+2$ roots of $A_{L,k}$ are on the unit circle (note that these include the trivial roots), i.e.\ at most two are not on the unit circle. It remains to show that exactly two are not on the unit circle, both for odd and even $k$.

For odd $k$, any root satisfies
\begin{equation}
z^{L+1} = \frac{1-tz}{t-z} = m(z).
\end{equation}
For $z\in(0,1)$ we clearly have that $z^{L+1}$ is a strictly increasing function $(0,1) \to (0,1)$. On the other hand
\begin{equation}
m'(z) = - \frac{t^2-1}{(t-z)^2}
\end{equation}
so $m(z)$ is a strictly decreasing function mapping $(0,1)$ to $(\frac1t,-1)$. It follows that there must be a root $z=z_{L,1} \in (0,1)$. Since $A_{L,k}$ is self-inversive, there is another root at $z=\frac{1}{z_{L,1}} > 1$

For even $k$, any root satisfies $z^{L+1} = -m(z)$. Now the RHS is a strictly increasing function $(0,1) \to (-\frac1t,1)$. To establish the existence of a root, note that
\begin{equation} 
\frac{d}{dz}z^{L+1} = (L+1)z^L \to (L+1) \text{ as } z\to 1^-
\end{equation}
while
\begin{equation}
\frac{d}{dz}(-m(z)) = \frac{t^2-1}{(t-z)^2} \to \frac{t+1}{t-1} \text{ as } z\to 1^-.
\end{equation}
Thus if
\begin{equation}
L+1 > \frac{t+1}{t-1} \quad \iff \quad L > \frac{2}{t-1}
\end{equation}
then as $z\to1^-$, $z^{L+1}$ approaches 1 at a greater slope than $-m(z)$, and hence $z^{L+1} < -m(z)$ for $z \in (1-\epsilon,1)$ for some $\epsilon>0$. So there is a real root $z=z_{L,2} \in (0,1)$. Again by the self-inversive property, there must be another at $z=\frac{1}{z_{L,2}} > 1$.
\end{proof}

It is the two roots $z_{L,1}$ and $z_{L,2}$ inside the unit circle which are now of interest, and the next step is to compute the asymptotic behaviour of these as $L$ grows large. First, observe that
\begin{equation}
A_{L,k}(t,\textstyle\frac1t) = (-1)^k(\textstyle\frac1t)^{L+1}(t+\textstyle\frac1t) \to 0 \qquad \text{as } L \to \infty,
\end{equation}
while for $\epsilon>0$ with $0 < \frac1t -\epsilon < \frac1t + \epsilon < 1$ we have
\begin{equation}
A_{L,k}(t,\textstyle\frac1t \pm \epsilon) \to \mp \epsilon t \qquad \text{as } L\to\infty.
\end{equation}
It follows that the two roots $z_{L,1}$ and $z_{L,2}$ must approach $\frac1t$ as $L\to\infty$. Next, rearrange the equation $A_{L,k}(t,z) = 0$ to get
\begin{equation}\label{eqn:Aeqn_rewritten}
z^{L+1} = (-1)^{k+1}\frac{1-tz}{t-z}.
\end{equation}
This implies that $z_{L,1} < \frac1t$ while $z_{L,2} > \frac1t$. Rearranging again,
\begin{align}
\log\left[(-1)^{k+1}(\textstyle\frac1t-z)\right] &= (L+1)\log z + \log t+\log(t-z) \label{eqn:zL1_eqn_from_PL1} \\
&\sim -(L+1)\log t + \log t + \log(t-\textstyle\frac1t) \\
&= \log\left(\frac{t^2-1}{t^{L+3}}\right).
\end{align}
Hence for $k=1,2$,
\begin{equation}
z_{L,k} \sim z^*_{L,k} =  \frac1t + (-1)^k\left(\frac{t^2-1}{t^3}\right)t^{-L}.
\end{equation}
It will turn out that the precision of these estimates is sufficient for even $L$ but not enough for odd $L$ (this is because there is significant cancellation between the $k=1$ and $2$ terms of~\eqref{eqn:GL_sum_formulation} for odd $L$). However, we can compute more a precise estimate for $z_{L,1}$ by iterating~\eqref{eqn:zL1_eqn_from_PL1}. That is, we substitute $z_{L,k}^*$ into the RHS of~\eqref{eqn:zL1_eqn_from_PL1}. Taking the next-to-leading term then gives
\begin{equation}
z_{L,k} \sim z_{L,k}^{**} = \frac1t +(-1)^k t^{-L}\left(\frac{t^2-1}{t^3}\right)\left(1 +(-1)^k t^{-L}L\frac{t^2-1}{t^2}\right).
\end{equation}

\subsection{Asymptotics}

We will compute the leading asymptotics for $P_L(t)$ by taking only the $k=1$ and $2$ terms from~\eqref{eqn:GL_sum_formulation}. We will then need to show that the remaining terms in the sum do not contribute to the dominant asymptotics, which amounts to showing that the first factor in the denominator of the summands is not too close to 0.

\subsubsection{Even \texorpdfstring{$L$}{L}}

We take only the $k=1,2$ terms of~\eqref{eqn:GL_sum_formulation}. Any term of the form $z_{L,k}^L$ or similar approaches 0 very quickly, so for the purposes of asymptotics these are all set to 0, except for the factor of $z_{L,k}^{L+1}$ in the numerator. This, and the other $z_{L,k}$ terms except for the important $(\frac1t - z_{L,k})^{L+1}$ term in the denominator, are then set to $\frac1t$. This yields
\begin{align}
&\sim \frac{(1-t^2)^{2M+1}}{2t} \sum_{k=1}^2 \frac{(-1)^{k+1}(1-\frac{1}{t^2})t^{-L-1}}{(\frac1t-t)^{L+1}(\frac1t-z_{L,k})^{L+1}} \\
&= \frac{t^2-1}{2t^3}\left(\frac{1}{(\frac1t-z_{L,1})^{L+1}} - \frac{1}{(\frac1t-z_{L,2})^{L+1}}\right)
\end{align}
Now using the approximations $z_{L,k}^*$ this simplifies to
\begin{equation}
\sim \left(\frac{t^4}{t^2-1}\right)^L t^{L^2}, \qquad L \text{ even}.
\end{equation}

\subsubsection{Odd \texorpdfstring{$L$}{L}}

If we follow the same procedure as above but take $L$ to be odd then everything cancels and we just get $0$. So we must instead switch to the more precise estimates $z_{L,k}^{**}$. Substituting, we get
\begin{align}
&\sim \frac{t^{2L^2+7L+2}}{2(t^2-1)^L}\left(\frac{1}{(t^{L+2} - L(t^2-1))^{L+1}} - \frac{1}{(t^{L+2} + L(t^2-1))^{L+1}}\right) \\
&\sim \frac{t^{2L^2+7L+2}}{2(t^2-1)^L} \cdot \frac{1}{t^{L(L+2)}}\left(\frac{1}{t^{L+2}-L(L+1)(t^2-1)} - \frac{1}{t^{L+2}+L(L+1)(t^2-1)}\right) \\
&= \frac{t^{L^2+3L-2}}{(t^2-1)^L} \cdot \frac{L(L+1)(t^2-1)}{1-\frac{L^2(L+1)^2(t^2-1)^2}{t^{2L+4}}} \\
&\sim \frac{t^2-1}{t^2} \cdot L^2 \cdot \left(\frac{t^3}{t^2-1}\right)^L \cdot t^{L^2}, \qquad L \text{ odd}.
\end{align}
To get from the first to the second line above we have used $(1+x)^L \sim 1+Lx$ for each of the two terms in the large parentheses.

\subsubsection{The roots on the unit circle}

With factors of the form $t^{L^2}$ coming from the $k=1$ and $2$ terms in the sum~\eqref{eqn:GL_sum_formulation}, the remaining terms can only affect the dominant asymptotics if the factor
\begin{equation}
D_{L,k}(t) = 1+(-1)^{k+1}(L+1)z_{L,k}^L(1-z_{L,k}^2)-z_{L,k}^{2L+2}
\end{equation}
in the denominator is very close to 0. Here we show this is not the case. Firstly,~\eqref{eqn:Aeqn_rewritten} can be used to eliminate the $z_{L,k}^L$ and $z_{L,k}^{2L+2}$ terms, giving
\begin{equation}
D_{L,k}(t) = 1+(L+1)(1-z_{L,k}^2)\frac{1-tz_{L,k}}{z_{L,k}(t-z_{L,k})} - \frac{(1-tz_{L,k})^2}{(t-z_{L,k})^2}
\end{equation}
Since the $z_{L,k}$ are all on the unit circle and $0<t<1$, the asymptotics of this (for large $L$) are
\begin{equation}
D_{L,k}(t) = \frac{(1-tz_{L,k})(1-z_{L,k}^2)}{z_{L_k}(t-z_{L,k})}L + O(1)
\end{equation}
Now
\begin{equation}
m(z) = \frac{1-tz}{t-z}
\end{equation}
is a M\"obius transformation which maps the unit circle to itself. Hence for $z=e^{i\theta}$ on the unit circle,
\begin{equation}
\left|\frac{(1-tz)(1-z^2)}{z(t-z)}\right| = \left|\frac{1-z^2}{z}\right| = 2|\sin\theta|.
\end{equation}
Let us assume we choose all the $z_{L,k}$ to be in the upper half unit circle. Then if we can show that none of the roots are too close to $\pm 1$, i.e.\ there are no roots of the form $z=e^{i\theta}$ with $\theta$ close to 0 or $\pi$, then $|D_{L,k}|$ cannot be very small. We will show that if $z_{L,k} = e^{i\theta}$ with $0 < \theta < \pi$, then in fact
\begin{equation}\label{eqn:theta_range_roots}
\frac{\pi}{L+1} \leq \theta \leq \pi - \frac{\pi}{L+1},
\end{equation}
from which it follows that
\begin{equation}
|D_{L,k}(t)| \geq 2\pi + O(L^{-1}).
\end{equation}
(This bound is in fact tight -- if we order the roots for $k=3,\ldots,L+1$ anticlockwise from right to left, then at $k=3$ and $k=L+1$ we have $|D_{L,k}(t)| \to 2\pi$ as $L\to\infty$, for all $t>1$. We will make no attempt to prove this here, however.)

We wish to show that if $z=e^{i\theta}$ with $0<\theta<\frac{\pi}{L+1}$ or $\pi-\frac{\pi}{L+1} < \theta < \pi$, then $z$ cannot be a root of $A_{L,k}(t,z)=0$. There are a number of cases, which we will briefly consider in turn. 

\paragraph{I. Odd $k$, small $\theta$.} We have $z^{L+1}=m(z)$.  If $0<\theta < \frac{\pi}{L+1}$ then the LHS will be in the upper half of the unit circle. The RHS is a M\"obius transformation which maps the upper half of the unit circle to the lower half, so there is no root.

\paragraph{II. Odd $k$, even $L$, large $\theta$.} Write $\theta = \pi - \phi$. Then
\begin{equation}
z^{L+1} = e^{i\theta(L+1)} = e^{i\pi(L+1)}e^{-i\phi(L+1)} = -e^{-i\phi(L+1)}
\end{equation}
which is again in the upper half of the unit circle.

\paragraph{III. Odd $k$, odd $L$, large $\theta$.} This time $z^{L+1} = e^{-i\phi(L+1)}$ which is in the lower half of the unit circle. However, observe that as $\phi \nearrow \frac{\pi}{L+1}$ we have 
\begin{equation}
\arg(z^{L+1}) \searrow -\pi < \arg(m(z))
\end{equation}
(where we take arguments to be between $-\pi$ and $\pi$). Then since
\begin{equation}
\frac{d}{d\theta} \arg(m(e^{i\theta})) = \frac{t^2-1}{t^2+1-2t\cos\theta} < 1
\end{equation}
while
\begin{equation}
\frac{d}{d\theta} \arg(e^{i\theta(L+1)}) = L+1,
\end{equation}
we must have $\arg(z^{L+1}) < \arg(m(z))$, so there can be no root for $\pi-\frac{\pi}{L+1} < \theta < \pi$.

\paragraph{IV. Even $k$, odd $L$, large $\theta$.} With even $k$ we have $z^{L+1} = -m(z)$. The RHS is now a M\"obius transformation mapping the upper half of the unit circle to itself.  The rest of this case is analogous to case \textbf{II} above.

\paragraph{V. Even $k$, small $\theta$.} This uses the same argument as case \textbf{III} above -- one shows that $\arg(z^{L+1}) > \arg(-m(z))$.

\sloppy
\paragraph{VI. Even $k$, even $L$, large $\theta$.} Similar to cases \textbf{III} and \textbf{V} above, this time showing that $\arg(z^{L+1}) < \arg(-m(z))$
\fussy

\sloppy
\printbibliography
\fussy

\end{document}